\newtheorem{theorem}{Theorem}
\newtheorem{definition}{Definition}
\newtheorem{lemma}[theorem]{Lemma}
\begin{document}

\title{Assessing the quality of near-term photonic quantum devices}

\author{Rawad Mezher}
\affiliation{Quandela SAS, 7 Rue Léonard de Vinci, 91300 Massy, France}
\email{rawad.mezher@quandela.com}
\author{Shane Mansfield}
\affiliation{Quandela SAS, 7 Rue Léonard de Vinci, 91300 Massy, France}
\email{shane.mansfield@quandela.com}
\maketitle

\begin{abstract}
For near-term quantum devices, an important challenge is to develop efficient methods to certify that noise levels are low enough to allow potentially useful applications to be carried out.
We present such a method tailored to photonic quantum devices consisting of single photon sources coupled to linear optical circuits coupled to photon detectors.
It uses the output statistics of BosonSampling experiments with input size $n$ ($n$ input photons in the ideal case).
We propose a series of benchmark tests targetting two main sources of noise, namely photon loss and distinguishability.
Our method results in a single-number metric, the Photonic Quality Factor, defined as the largest number of input photons for which the output statistics pass all tests.
We provide strong evidence that passing all tests implies that our experiments are not efficiently classically simulable,
by showing how several existing classical algorithms for efficiently simulating noisy BosonSampling fail the tests.
Finally we show that BosonSampling experiments with average photon loss rate per mode scaling as $o(1)$ and average fidelity of $ (1-o(\frac{1}{n^6}))^2$ between any two single photon states is sufficient to keep passing our tests.
Unsurprisingly, our results highlight that scaling in a manner that avoids efficient classical simulability will at some point necessarily require error correction and mitigation.
\end{abstract}



\section{Introduction}
\label{intro}
Noisy intermediate-scale quantum (NISQ) \cite{NISQ} devices have now become available across a variety of different hardware platforms.
Proof-of-principle experiments have already demonstrated that such devices can massively outperform their classical counterparts for specific tasks such as sampling \cite{Arute2019,Zhong2020,Wu2021,Zhong2021}.
They also offer great promise for a range of near-term applications,
notably through variational quantum algorithms (VQAs) \cite{Tilly2021} and quantum machine learning (QML) \cite{QML}.

Our focus here will be on \emph{photonic} NISQ  devices
composed of three main components
\cite{KLM2001}:
$(i)$ $n$ single-photon sources (e.g.\ \cite{SGS2016,Nat2021,Singlephotrev}),
$(ii)$ a linear optical circuit of $m$ spatial modes composed of layers of configurable components like phase shifters and beam splitters (e.g.\ \cite{Clements2016,Reck94}),
and $(iii)$ single-photon detectors (e.g.\ \cite{Hadfield2009}).
The term NISQ is added to highlight that these devices are noisy with no error correction capabilities \cite{QEC}.

Aside from BosonSampling experiments \cite{AA11,Lund14,HK17,MHP21} other more useful applications related to quantum machine learning can be implemented in the near-term with such devices, with the potential to demonstrate quantum-over-classical advantages \cite{UAM21,Gan21,ORCA,BSlearning}.  
It is therefore important to certify the correct functioning of these devices, and ensure the noise levels are low enough so as to maintain any potential quantum advantage offered. This is because sufficiently noisy quantum devices are known to be efficiently classically simulable in many cases, and thus offer no substantial quantum advantage  \cite{RSG18,OB18,ONF21,BMSQuantum,RaulGeom2020,Lim21}.

Although many techniques have been developed to certify quantum devices (see for example \cite{EH20,KR21} for a review), most are tailored to the gate-based model of quantum computing, and are therefore not very natural candidates for certifying our photonic NISQ devices. Indeed, the works of \cite{UAM21,Gan21,ORCA,BSlearning} show that it is possible to perform some machine learning tasks by means of a boson sampler and adaptive measurements, without the need to encode the photons as qubits nor to perform post-selections to implement non-deterministic two-qubit gates \cite{KLM2001}. Certifying photonic NISQ devices such as those used in \cite{UAM21,Gan21,ORCA,BSlearning} is what we will be concerned with in this work.

Our idea is similar in spirit to the Quantum Volume (QV) benchmark \cite{CBS+19,BK+19}. Our method consists of running several BosonSampling \cite{AA11} experiments with
varying input size, mode size, and parameter configurations.
We then perform a set of tests on the output statistics of these experiments, and from the results of these tests compute our \emph{Photonic Quality Factor} (PQF), a single number metric which characterizes the \emph{average} performance of photonic NISQ devices. The PQF is the largest value of $n$ for which a photonic NISQ device can pass \emph{all} our developed tests. Intuitively, PQF can be understood as the largest number of input photons of a photonic NISQ device for which  \emph{reliable} BosonSampling experiments can be carried out using this device. 
BosonSampling can be viewed as the natural photonic analogue  of random quantum circuit sampling (RQCS) used to evaluate QV \cite{CBS+19}. However, the tests used in computing QV cannot  straightforwardly be imported to our setting to compute PQF. The main reason is that, although the statements of hardness of classical simulability of BosonSampling and RQCS are based on similar arguments \cite{MH17}, the so-called heavy output generation (HOG) test, which is central to computing QV \cite{CBS+19}, and the arguments underlying its validity as a certification tool for RQCS do not apply to BosonSampling \cite{AC16}. This motivates the need to develop other tests for photonic benchmarking.

The tests we propose are designed to target two main sources of noise in photonic NISQ devices, namely photon loss \cite{OB18} and distinguishability \cite{RMC18}. These noise sources when present either individually or simultaneously in sufficient quantity can destroy any potential quantum advantage \cite{RSG18,OB18,RMC18}. In total, we develop five tests, one for photon loss inspired from the results of \cite{RSG18}, and four for distinguishability. Three of the distinguishability tests are designed to detect variations in low order correlations between photons that can provide distintuishability witnesses, and are inspired from results in \cite{W16}.
The remaining distinguishability test is for variations in high order correlations, inspired by the results of \cite{S16}.
Note that the number of experiments needed for our method scales efficiently with the size of the boson sampler.

Passing all tests provides strong evidence that a photonic NISQ device can be used to perform applications showing a quantum-over-classical advantage. Indeed, as we will show in later sections, to the best of our knowledge all existing efficient classical simulation strategies \cite{RSG18,OB18,ONF21,RMC18} for noisy BosonSampling fail some or all of our tests after some (fixed) size of the boson sampler. Furthermore, we show that many efficient \emph{adversarial} classical algorithms for \emph{spoofing} BosonSampling such as those based on mean-field strategies \cite{Tichy14}, brute force permanent approximations \cite{UAM21,Gurvits2005,AH12}, or methods similar to those in \cite{Google21} will also fail some of our tests after a fixed size of the boson sampler. We stress however that our results \emph{do not} rule out the existence of efficient classical algorithms other than those studied here, which could produce statistics capable of passing the tests.

This paper is organized as follows. In section \ref{prelim} we will introduce various preliminary concepts and define our noise model. In section \ref{tests} we set out the tests to be performed to evaluate the PQF and discuss their sample complexity. In section \ref{PQF} we introduce PQF, and argue that it is a good metric for assessing the quality of a photonic NISQ device. In section \ref{experiment}, we design an experiment with PQF of $+\infty$; i.e.\ which would pass all tests indefinitely. Finally, we discuss our results in section \ref{discussion}.

\section{Preliminary concepts}
\label{prelim}
\subsection{Notation}
In this section, we briefly fix some notation to be used throughout the paper. For  $s \in \mathbb{R}$, $\left \lceil{s}\right \rceil \in \mathbb{Z}$  will denote the ceiling function applied to $s$; i.e.\ the smallest integer greater than or equal to $s$. 

We will use asymptotic notation to define various quantities such as precision of a test, or sample complexity of an experiment. Let $f,g: \mathbb{N} \to \mathbb{R}^{+}$.
We say that $f(n)=O(g(n))$ if there is a constant $c>0$ and a positive integer $n_0$, such that for all $n\geq n_0$, $f(n) \leq {c.g(n)}$. If $f(n)=o(g(n))$ , this means that $lim_{n \to \infty}\frac{f(n)}{g(n)}=0$. 

The expectation value of an observable $\mathbf{O}$ over some quantum state $|\psi\rangle$ will be denoted as $\langle \mathbf{O} \rangle_{\psi}:=\langle \psi |\mathbf{O}| \psi\rangle$, for simplicity we will omit the $\psi$ subscript and write the expectation value as $\langle \mathbf{O} \rangle$. For unitary matrices $U$, and $f$ a function on unitaries, the expectation value of $f$ over the Haar measure of the unitary group will be denoted as $\mathbf{E}_U(f(U))$.

For a random variable $X$ distributed according to some probability distribution, $\mathbf{E}(X)$ will denote the expectation value of $X$, $\mathsf{Var}(X)$ its variance, and $\sigma(X)=\sqrt{\mathsf{Var}(X)}$ its standard deviation.

For an $m \times m$ matrix $U$ and $m \in \mathbb{N}^{*}$, $\mathsf{Perm}(U)$ will denote the permanent of $U$ \cite{perm}.

The total variation distance (TVD) between two probability distributions
$D_1=\{p_x\}$ and $D_2=\{q_x\}$ will be denoted as $\|D_1-D_2\|$, and is given by

\begin{equation*}
   \|D_1-D_2\|=\frac{1}{2}\sum_{x}|p_x-q_x|. 
\end{equation*}

\subsection{BosonSampling}
BosonSampling, as originally proposed by Aaronson and Arkhipov \cite{AA11} is defined as a specific type of \emph{sampling} problem. We will begin by defining the problem setting as proposed in \cite{AA11}.
Let $m, n \in \mathbb{N}^{*}$ be positive integers with $m \geq n$. Let $\mathcal{S}_{m,n}$ be the set of all possible tuples $(s_1,\dots,s_m)$ of $m$ non-negative integers $s_i \in \mathbb{N}$, with $\sum_{i=1,..,m}s_i=n$. Let $U$ be an $m \times m$ Haar random unitary matrix. For a given fixed  $T=(t_1,\dots,t_m) \in \mathcal{S}_{m,n}$, and any given $S=(s_1,\dots,s_m) \in \mathcal{S}_{m,n}$, construct an $n \times n$ matrix $U_{T,S}$ by first constructing an $n \times m$ matrix $U_{S}$ from $U$ as follows: take $s_1$ copies of the first row of $U$, $s_2$ copies of the second row of $U$, and so on. Then, construct $U_{T,S}$ from $U_S$ by taking $t_1$ copies of the first column of $U_S$, $t_{2}$ copies of the second column of $U_{S}$,and so on. Let
\begin{equation}
 \label{eqbosonsampling1}
 P(S)=\frac{|\mathsf{Perm}(U_{T,S})|^2}{s_1 !\dots_m!}.
\end{equation}
It can be shown that $P(S) \in [0,1]$, and that the set $D_{U}=\{P(S) \mid S \in \mathcal{S}_{m,n}\}$ is a probability distribution over \emph{outcomes} $S$ \cite{AA11}. 

Let $\varepsilon \in [0,1]$ be a given \emph{precision}. Approximate BosonSampling can then be defined as the problem of sampling outcomes $S$ from a probability distribution $\tilde{D}$ such that
\begin{equation}
 \label{bosonsampling}
 \|\tilde{D}-D_{U}\| \leq \varepsilon.
\end{equation}
Exact BosonSampling has $\varepsilon=0$. For $m \gg n^2$, \cite{AA11} showed that no (efficient) $\mathsf{poly}(n,\frac{1}{\varepsilon})$-time classical algorithm can solve approximate BosonSampling, unless some complexity theoretic conjectures which are widely believed to be true turn out to be false. A similar result was shown for exact BosonSampling \cite{AA11}. On the other hand, BosonSampling can be solved efficiently on a photonic quantum device which is noiseless \cite{AA11}, or whose noise levels are low enough \cite{AB16,Arkhipov15,KK14}. We will refer to such a device as a \emph{boson sampler}.

Passing $n$ identical single photons through a lossless $m$-mode \emph{universal} linear optical circuit \cite{Reck94,Clements16} configured in such a way that it implements a Haar random unitary transformation $U$ and then measuring all the output modes of the circuit using perfect detectors samples outputs from $D_U$ \cite{AA11} (see Figure \ref{fig1}), where $T$ corresponds to the input configuration of single photons, and $S$ the output configuration.

\begin{figure}[htbp!]
\includegraphics[scale=0.5]{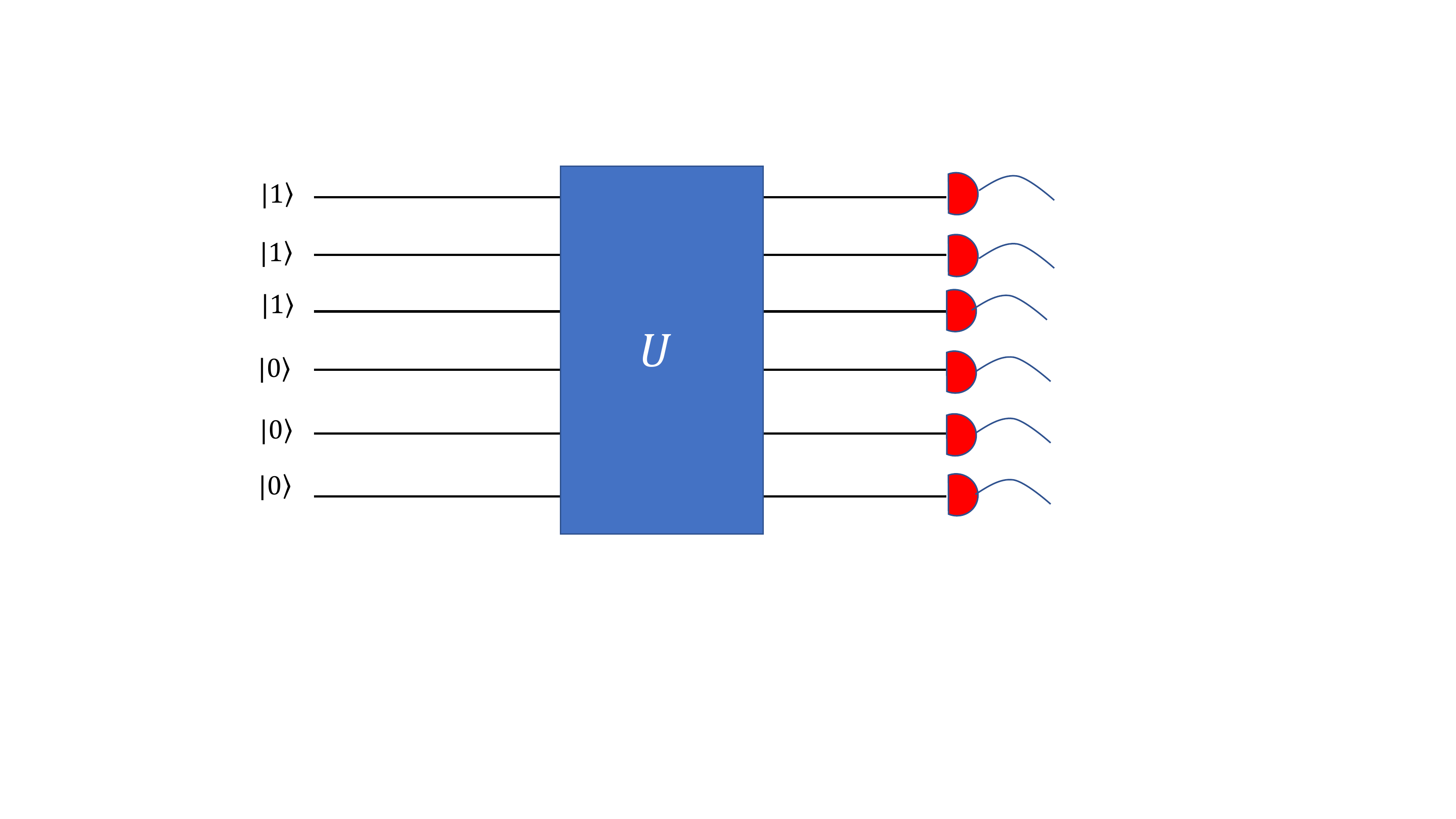}
\caption{A depiction of a quantum device which implements BosonSampling. Here there are $n=3$ input photons and $m=6$ modes. The states $|1\rangle$ and $|0\rangle$ are Fock states; i.e.\ a single photon or the vacuum state, corresponding to no photon being present, respectively. 
$U$ is a Haar random $6 \times 6$ unitary matrix corresponding to the action of the linear optical circuit on the input.
}
\label{fig1}
\end{figure}

Note that a universal linear optical circuit can be configured to implement \emph{any} $m \times m$ unitary chosen from the Haar measure, for example using the recipe of \cite{Haar17}. Also, note that for the case where $m \gg n^2$, it is known that the probability of \emph{collisions} -- events where two or more photons emerge at an output mode of a boson sampler -- can be neglected \cite{AA11}. This can be advantageous from an experimental point of view, as it would mean that the single-photon detectors are not required to be number-resolving. We will work with $m \gg n^2$ in this article, note however that the tests developed here can be adapted to work for any $m \geq n$.

 BosonSampling is one example of a family of sampling problems which can be used to show that quantum devices can massively outperform their classical counterparts \cite{MH17}. It is worth noting that other models of BosonSampling designed to overcome certain experimental difficulties have been proposed \cite{Lund14,MHP21,HK17}, although in this work we will focus on the proposal of \cite{AA11} as it fits naturally for the kinds of photonic NISQ devices we are concerned with.  Furthermore, two milestone \emph{Gaussian} BosonSampling experiments \cite{Zhong2020,Zhong2021} have very recently been performed which at present are believed to be classically intractable.
 
 Our aim here is to use the results of BosonSampling experiments run on photonic NISQ devices to assess the quality of these devices. This is similar in spirit to how \cite{CBS+19} use RQCS, another sampling problem strongly believed intractable for efficient classical devices \cite{MH17}, to assess the quality of their superconducting quantum devices. However, our techniques and tests are different to those used in \cite{CBS+19}. While useful quantum algorithms for photonic NISQ devices may well entail of other kinds of interferometers than boson samplers, the latter do model \emph{generic} instances of such algorithms, analogous to how random quantum circuits are generic instances of structured quantum circuits used to implement specific protocols \cite{CBS+19}. Therefore, similar to random circuits in the gate-based model \cite{CBS+19}, we would expect that statistics of BosonSampling experiments are good indicators of how more structured linear optical circuits will behave.
 

 \subsection{Noise model}
 \label{secnoise}
 In this article we will be concerned with two sources of noise affecting photonic NISQ devices, namely photon loss and distinguishability of single photons, specifically pairwise distinguishability. We will not consider errors due to imperfect calibration of optical components such as beam splitters, as these can be studied and accounted for independently using techniques such as those in \cite{Pai19}. For now we also choose not consider other sources of error which may arise, e.g.\ due to dark counts in the detectors, cross-talk effects, or higher-order distinguishability, preferring to leave these for future work.
 
 Photon loss may occur at any point in the experimental setup, from source through fibres, interferometer to detector.
 We will make the following simplifying assumptions.
 \begin{enumerate}[label={\bfseries (A\arabic*)}]
 \item \label{A1} All modes suffer the same photon loss rate, and we will therefore model photon loss by a single parameter $\lambda \in [0,1]$, which is defined as the probability that a photon is lost in any given mode of the device.
 \item \label{A2} Our device can be modelled as a lossy input state, where each single photon can be lost independently with probability $\lambda$, followed by an ideal optical circuit, and ideal detectors.
 \end{enumerate}
Assumption \ref{A1} has been considered in many other works \cite{OB18,RSG18,ONF21}, and can be justified for the case of \emph{symmetric} linear optical circuits such as those in \cite{Clements16}, in which the number of 2-mode optical components (beam splitters and phase shifters) is roughly the same for each mode, each optical component has the same loss rate, and detectors also have identical loss rates.
Assumption \ref{A2} is based on the well-known fact that uniform losses in all modes commute with linear optical circuits, and therefore  losses happening at different parts of the circuit can be {commuted back} to the level of the input (see for example \cite{OB18} for a rigorous justification). 
 
 For $n$ single photon inputs photon distinguishability can be modelled in general by an $n \times n$ matrix $D:=(D_{ij})_{i,j \in \{1,\dots,n\}}$. For $|\phi_i\rangle$ and $|\phi_j\rangle$, the (pure) states of photons $i$ and $j$, we will define $D_{ij}:=\langle \phi_i|\phi_j\rangle$ \cite{RMC18}. More generally, one can characterize photon distinguishability for mixed photon states using convex combnations of elements of the form $\langle \phi_i|\phi_j\rangle$ \cite{S2020}.
 For BosonSampling experiments, \cite{RMC18} showed that one can assume distinguishabilities to be positive reals without loss of generality. 
Following \cite{RMC18,RSG18} we will make the following simplifying assumptions about distinguishability.
 \begin{enumerate}[start=3,label={\bfseries (A\arabic*)}]
 \item \label{A3} All pairs of photons are {equally} distinguishable.
 \item \label{A4} We can model distinguishability as being constant throughout the circuit.
 \end{enumerate}
From \ref{A3}, we have that $D_{ij}=\langle \phi_i|\phi_j\rangle=x+(1-x)\delta_{ij}$ for all $i,j$, where $\delta_{ij}$ is the Kronecker delta function and $x \in [0,1]$ is a number we will refer to as the \emph{distinguishability}. We will refer to $F=x^2$ as the \emph{average fidelity} between any two single photon states. 
 
As a concluding remark, note that $x$ and $\lambda$ are  \emph{average} quantities, where the average is over all possible parameter configurations of linear optical circuits (i.e.\ over the Haar measure on $m \times m$ unitaries). We will also assume the following.
  \begin{enumerate}[start=5,label={\bfseries (A\arabic*)}]
 \item \label{A5} There is small variance in loss and distinguishability over varying parameter configurations.
 \end{enumerate}
This assumption is reasonable from an experimental point of view. For example, photon loss is very weakly dependent on the specific configuration (angles of beam splitters and phase shifters) of the beam splitters and phase shifters in a linear optical circuit, and depends mainly on the \emph{depth} of the circuit.
 Assumption \ref{A5} is one of the main reasons we believe the proposed tests can be used to infer the performance of linear optical circuits with fixed configurations performing specific computations.
 
 \section{The loss and distinguishability tests}
 \label{tests}
 In what follows, we will be working in the \emph{no-collision} regime of BosonSampling ($m \gg n^2$), where the probability of more than one photon being detected at any output mode of the boson sampler is negligible \cite{AA11}. To this end, all experiments are based on performing BosonSampling  with $n$ (lossy and distinguishable) single photon inputs and an $m=n^{2+\gamma}$ mode universal linear optical circuit \cite{Reck94,Clements16}, where $\gamma>0$ is a constant. The input of the boson sampler (in the ideal case where no losses happen and photons are identical) is the Fock state $|1_1,1_2,\dots,1_n,0_{n+1},\dots,0_{m}\rangle$ where the $n$ single photons are placed in the first $n$ modes, one per each mode. The protocol for collecting the experimental data needed for the tests is as follows. 
 
 \begin{enumerate}
     \item  Configure the universal linear optical circuit to implement an $m \times m$ Haar random unitary $U$, using the recipe of \cite{Haar17} for example.
      \item  Run $K^{'}$ BosonSampling experiments, and collect the output statistics. Where each BosonSampling experiment consists in passing $n$ single photons through $U$, and measuring all output modes of $U$ using single-photon detectors. 
      \item  Repeat steps 1) and 2) for $K^{''}$ different values of $U$.
 \end{enumerate}
 We will discuss  how large $K^{'}$
 and $K^{''}$ need to be, in function of $n$, in later parts of this section. We will postpone talking about  the significance of each of the developed tests until section \ref{PQF}, and will only proceed in this section  by describing in detail how to perform each test.
 \subsection{ A test for loss}
 \label{sectloss}
 
 The first test we will describe here is for photon loss. We will call $\lambda_{U}$ the photon loss for BosonSampling experiments involving a fixed choice of $U$. From our model for photon loss (see section \ref{secnoise}), the probability of $l$ photons being lost is given by
 \begin{equation}
     \label{eqpl}
     P(l)= {n \choose n-l}(1-\lambda_U)^{n-l}\lambda_U^l,
 \end{equation}
 with $l \in \{0,\dots,n\}$.
These probabilities are those of the well known binomial distribution with the mean number of lost photons being $\mathbf{E}(l)=n\lambda_U$ and $\mathsf{Var}(l)=n\lambda_U(1-\lambda_U)$. We start by computing $\lambda_U$ for each of the $K^{''}$ values of $U$ individually. To do this, we look at the experimental statistics of the $K^{'}$ BosonSampling experiments involving a fixed $U$, and from these experiments we compute for $l=0,\dots,n$ estimates of $P(l)$ as
\begin{equation}
    \tilde{P}(l)=\frac{n_{l,U}}{K^{'}},
\end{equation}
 where $n_{l,U}$ is the number of experiments where we detected $l$ lost photons, out of $K^{'}$ total experiments for a given fixed $U$. For large enough value of $K^{'}$, $\tilde{P}(l) \approx P(l)$. We then use these values of $\tilde{P}(l)$ to compute $\tilde{\lambda}_U$, the estimate of $\lambda_U$ as follows
 \begin{equation}
     \tilde{\lambda}_U=\frac{\sum_{l=0}^{n}l\tilde{P}(l)}{n} \approx \frac{\mathbf{E}(l)}{n}=\lambda_U.
 \end{equation}
 We repeat this procedure for each of the $K^{''}$ different values of $U$, then compute 
 \begin{equation}
 \tilde{\lambda}=\frac{\sum_{U}\lambda_U}{K^{''}}.
 \end{equation}
 Again, for large enough $K^{''}$,
 \begin{equation}
 \tilde{\lambda} \approx \mathbf{E}_U(\lambda_U):=\lambda.
 \end{equation}
 Now, we are in a position to describe our loss test, which we will henceforth call $t_{loss}$. It is simply the following
 \begin{itemize}
 \item ($t_{loss}$): For varying values of $n$, check whether $\lambda \leq o(1)$.
 \end{itemize}
 
 A key point to note is that $\lambda$ \emph{need not be} a decreasing function of $n$, as long as it is \emph{below} a decreasing function of $n$ (i.e.\ $o(1)$), then we can keep on passing $t_{loss}$. This remark is important in the context of photonic NISQ devices since, because of the absence of any error correction, one would expect $\lambda$ to be non-decreasing with $n$. The technological challenge is therefore finding a way of managing this non-decrease as we scale up, to keep on passing $t_{loss}$ up to a value of $n$ where a potentially advantageous quantum computation could be carried out on the NISQ device. The same reasoning holds for all the other tests developed in the coming subsections.
 \subsection{Tests for distinguishability}
 
 \subsubsection{Low order correlation tests}
 Our low order correlation tests are inspired from the results in \cite{W16}, showing that the first, second, and third statistical moments of a set of 2-mode correlation functions computed in BosonSampling experiments are enough to distinguish the behaviour of $n$ identical photons from the behaviour of $n$ \emph{simulated} Bosons \cite{Tichy14} (particles in an efficiently classically simulable version of BosonSampling, based on \emph{mean field} approaches, designed to mimic certain bunching effects \cite{RudolphBS} found in ideal BosonSampling with identical particles), as well as from that of $n$ totally distinguishable Bosons ($x=0$). BosonSampling experiments carried out with simulated Bosons or totally distinguishable particles are efficiently simulable classically \cite{Tichy14,RMC18}. The  2-mode correlation function we will study here is the following \cite{W16}
 \begin{equation}
     \label{eqCdataset1}
     C_{ij}=\langle \textbf{n}_i\textbf{n}_j \rangle -\langle \textbf{n}_i \rangle \langle \textbf{n}_j \rangle.
 \end{equation}
 Where $\textbf{n}_i$ (resp. $\textbf{n}_j$) is the \emph{number operator} of mode $i$ (resp. $j$), which quantifies how many photons are in mode $i$ (resp. $j$) of the output of the boson sampler (see \cite{Kok07,W16} for more details). Following \cite{W16}, we refer to the set of all such correlators 
 \begin{equation}
     \label{eqCdataset2}
      C=\{C_{ij}\}_{i,j=1,\dots,m},
 \end{equation}
 as the $C$-data set. Note that the number of elements of $C$ is $|C|={m \choose 2}$. For a fixed $U$, and a fixed $l \in \{0,\dots,n-2\}$, we will compute estimates  $\tilde{C}_{ij,l,U}$ of $C_{ij,l,U}$  for all $i,j$ using the statistics of $n_{l,U}$ experiments. The indices $l$ and $U$ in $C_{ij,l,U}$ are to indicate at which number of lost photons calculations are carried out, as well as for which choice of $U$. Then, we compute the following  for all $i,j \in \{1,\dots,m\}$ such that $i<j$
 \begin{equation}
     \label{eqcijhaar}
     \frac{\sum_{U}\tilde{C}_{ij,l,U}}{K^{''}} \approx \mathbf{E}_U(C_{ij,l}),
 \end{equation}
 where the approximation in Equation (\ref{eqcijhaar}) holds for large enough $n_{l,U}$, $K^{'}$ and $K^{''}$. We will then use the values of the \emph{averaged} $C$-data set
 \begin{equation}
     \label{eqcdatasethaar}
     C_{H}:=\{\mathbf{E}_U(C_{ij,l})\}_{i,j=1,\dots,m},
 \end{equation}
 to compute the following quantities directly related to the first, second, and third order statistical moments of $C_H$ \cite{W16}
 \begin{equation}
     \label{eqNM}
     NM=\frac{m^2}{n}\mathbf{E}(\mathbf{E}_U(C_{ij,l})):= \frac{m^2}{n}\frac{\sum_{i,j}\mathbf{E}_U(C_{ij,l})}{{m \choose 2}},
 \end{equation}
 \begin{equation}
     \label{eqCV}
     CV=\frac{\sqrt{\mathbf{E}(\mathbf{E}_U(C^2_{ij,l}))-(\mathbf{E}(\mathbf{E}_U(C_{ij,l})))^2}}{\mathbf{E}(\mathbf{E}_U(C_{ij,l}))},
 \end{equation}
 and
 \begin{equation}
     \label{eqskewness}
     S=\frac{\mathbf{E}(\mathbf{E}_U(C^3_{ij,l}))-3\mathbf{E}(\mathbf{E}_U(C^2_{ij,l}))\mathbf{E}(\mathbf{E}_U(C_{ij,l}))+2\mathbf{E}(\mathbf{E}_U(C_{ij,l})))^3}{\big (\sqrt{\mathbf{E}(\mathbf{E}_U(C^2_{ij,l}))-(\mathbf{E}(\mathbf{E}_U(C_{ij,l})))^2}\big )^{3}}.
 \end{equation}
 $NM$ stands for normalized mean, $CV$ for coefficient of variation and $S$ for skewness. Note that Walschaers et al. \cite{W16} compute the analytical values of these quantities for identical photons ($x=1$) by using techniques from random matrix theory to formally average over the Haar measure. We will denote these theoretical quantities as $NM_{Th,id}$, $S_{Th,id}$, and $CV_{Th,id}$.
 
 Our low order correlation tests consist of verifying the following for $l \in \{0,\dots,n-2\}$
 \begin{itemize}
     \item $(t_{d_1}$): For varying values of $n$, check whether $|NM-NM_{Th,id}| \leq o(\frac{1}{n})$.
 \item $(t_{d_2}$): For varying values of $n$, check whether $|CV-CV_{Th,id}| \leq o(\frac{1}{n})$.
 \item $(t_{d_3}$): For varying values of $n$, check whether $|S-S_{Th,id}| \leq o(\frac{1}{n})$.
 \end{itemize}
 
 Implicit in our description of the tests are the two following remarks. The first being that BosonSampling experiments with $l$ lost photons can be thought of as a BosonSampling with $n-l$ input photons, where these photons are randomly permuted in ${n \choose n-l}$ ways in the first $n$ modes, each such permutation is equiprobable to appear per each run of a BosonSampling experiment \cite{AB16}. The second remark is that these permutations are irrelevant from the point of view of averaging over the Haar measure. That is, each  permutation by itself will give rise to the \emph{same} averaged $C$-data set $C_H$ when used as a (fixed) input for BosonSampling experiments averaged over Haar random $U$'s. Therefore, as is the case for our experiments, a statistical mixture of these permutations (each appearing equiprobably) will also give the same $C_H$. The reason behind the fact that each permutation gives the same $C_H$ is that each permutation is related to the \emph{canonical} permutation $|1_1,\dots,1_{n-l},0_{n-l+1},\dots,0_{m}\rangle$ by a unitary $U_{route}$, and the Haar measure is \emph{invariant} under multiplication by a fixed unitary $U_{route}$. In Appendix \ref{approuting} we provide a constructive procedure to implement $U_{route}$ for any permutation using linear optical circuits. The same arguments hold for statistics collected for the high order correlation test in the next section.

 \subsubsection{A high order correlation test} 
 Our high order correlation test is inspired from the work of \cite{S16}. Let $1 \leq K \leq m$. The  basic idea behind this test is to compute, for a loss $l \in \{0,\dots,n-1\}$ and a fixed $U$, an estimate (  $\tilde{P}_{K,l,U}(0_{K+1}\dots0_{m})$) of the  probability of finding all $n-l$ particles in the first $K$ output modes of the boson sampler
 \begin{equation}
 \label{eqph}
 P_{K,l,U}(0_{K+1}\dots0_{m})=\sum_{s_1,\dots,s_K}P((s_1,\dots,s_K,0_{K+1},\dots0_{m})),
 \end{equation}
 where $s_1+\dots+s_K=n-l$. The estimate $\tilde{P}_{K,l,U}(0_{K+1}\dots0_{m})$ is computed using the statistics of the $n_{l,U}$ BosonSampling experiments for fixed $l$ and $U$, and converges to the value in Equation (\ref{eqph}) for large enough $n_{l,U}$. Then, we compute 
 \begin{equation}
 \label{eqphhaar}
 \sum_{U}\frac{\tilde{P}_{K,l,U}(0_{K+1}\dots0_{m})}{K^{''}} \approx \mathbf{E}_{U}(P_{K,l}(0_{K+1}\dots0_{m})).
 \end{equation}
Choose $m-K=n-1$ so that $(n-l)(m-K)<n(m-K) <<m$. For this choice of $K$, and for the case of identical photons, \cite{S16} shows that
\begin{equation}
    \label{eqtheoravph}
    \mathbf{E}_{U}(P_{K,l}(0_{K+1}\dots0_{m}))_{Th,id}=1-O(\frac{(n-l)(m-K)}{m}).
\end{equation}
\cite{S16} also shows that, for the case of photons with distinguishability $x$ (where $x \approx 1$), we have (see appendix \ref{appC} )
\begin{multline}
    \label{eqtheoravph2}
    \mathbf{E}_{U}(P_{K,l}(0_{K+1}\dots0_{m}))_{Th,id}-\mathbf{E}_{U}(P_{K,l}(0_{K+1}\dots0_{m}))_{Th,x}=\\(1-F)\frac{(n-l-1)(n-l)}{m}\mathbf{E}_{U}(P_{K,l+1}(0_{K+1}\dots0_{m}))_{Th,id}=\\ (1-x^2)\frac{(n-l-1)(n-l)}{m}\mathbf{E}_{U}(P_{K,l+1}(0_{K+1}\dots0_{m}))_{Th,id}.
    \end{multline}
    The indexes $id$ and $x$ in the above equations are used to differentiate the case of identical photons from that of photons with distinguishability $x$. As before, the $Th$ index indicates an analytical expression obtained from formally averaging over the Haar measure.
    
    Our test for high order correlations can now be described as follows for $l\in \{0,\dots,n-1\}$
    \begin{itemize}
        \item $(t_{d_4})$: For varying values of $n$, check whether \\ $|\mathbf{E}_{U}(P_{K,l}(0_{K+1}\dots0_{m}))-\mathbf{E}_{U}(P_{K,l}(0_{K+1}\dots0_{m}))_{Th,id}| \leq o(\frac{1}{n^{\gamma}})$.
    \end{itemize}
    
    As a concluding remark, note that the probabilities used in evaluating the test $(t_{d_4})$ \cite{S16} are a special case of a more general set of such probabilities developed, by the same author of \cite{S16}, in \cite{S2020}. The main differences being that calculations in \cite{S2020} include dark counts of the detectors, whereas we do not consider this here, and also the quantities computed in \cite{S2020} are for any $m \geq n$ and include the effect of loss as a binomial distribution which enters into computing these quantities. In our case however, we work only in the $m \gg n^2$ case, and compute probabilities for each fixed loss $l$, by viewing this loss as a lossless BosonSampling with $n-l$ input photons.
    
    \subsection{Sample complexity of the tests}
    \label{secsamplcomp}
    In this section, we discuss the sample complexities of our developed tests, i.e.\ how many BosonSampling experiments need to be done to estimate to a desired accuracy our quantities of interest. We will show that the number of these experiments scales efficiently with $n$. Rather than go through each quantity individually, we present an argument which holds in general for all the quantities needed in our tests, and for averaging both over the Haar measure over all unitaries, or over the number of experiments for a fixed unitary. The starting point of our argument is Chebyshev's inequality \cite{Chebyshev}. Note that in order to use this inequality we will assume that the random variables underlying our quantities of interest are independent. Practically, this means the outcomes of two different BosonSampling experiments are independent, which is a reasonable assumption. Let $X_{i}$ for $i=1,\dots,L$ be independent random variables chosen from a probability distribution with mean $\mathbf{E}(X)=\mu$ and variance $\mathsf{Var}(X)=\sigma^2$. Chebyshev's inequality gives the following bound on the estimate $\frac{\sum_{i}X_i}{L}$.
    \begin{equation}
        \label{eqchebyshev}
        Pr(|\frac{\sum_{i}X_i}{L}-\mu| < \epsilon) \geq 1-\frac{\sigma^2}{L\epsilon^2},
    \end{equation}
    for all $\epsilon \in \mathbb{R}^{+*}$.

    For example, if we are computing $\tilde{\langle \mathbf{n}_u\mathbf{n}_j \rangle }_{uj,l,U}$, an estimate of  $\langle \mathbf{n}_u\mathbf{n}_j \rangle_{uj,l,U}$ for some fixed loss $l$, and fixed $U$, then $L=n_{l,U}$, $\mu=\langle \mathbf{n}_u\mathbf{n}_j \rangle_{uj,l,U}$, $X_i$ is the sum of the occupancy of modes $u$ and $j$ at experiment $i$,  and $i \in \{1,\dots,n_{l,U}\}$. Chebyshev's inequality in this case reads 
    \begin{equation}
         Pr(|\frac{\sum_{i}X_i}{n_{l,U}}-\langle \mathbf{n}_u\mathbf{n}_j \rangle_{uj,l,U}| < \epsilon) \geq 1-\frac{\sigma^2}{n_{l,U}\epsilon^2}.
    \end{equation}
    Similarly, we can form Chebyshev inequalities for all our quantities of interest, including for averaging over the Haar measure.
    
    Looking at Equation (\ref{eqchebyshev}), the trick is to notice that, for a variance $\sigma^2$ which is \emph{bounded} (i.e.\ $\sigma^2 \leq O(1)$), one can in principle compute the estimate to arbitrary precision $\epsilon$ with arbitrary confidence $1-\frac{\sigma^2}{L\epsilon^2}$, by an interplay between the choice of $L$ and $\epsilon$. If we would like an inverse polynomial in $n$ precision $1/poly(n)$, with $1-1/poly(n)$ confidence, then we can choose $\epsilon=1/poly(n)$, and $L\epsilon^2 \geq O(1)$ or equivalently, $L \geq O(1).poly^2(n)=poly(n)$. Thus, for bounded variance, the number of experiments needed to reach a good precision scales efficiently with the system size. All we have to do now is to show that the variance of all our quantities of interest is bounded. For probabilities and photon loss parameters, this is trivially true, since these quantities lie in the interval $[0,1]$. What remains is to show that the distributions over the quantities $C_{ij}$ have bounded variance. Since we are in the no-collision regime, then $n_{i},n_{j} \in \{0,1\}$. This immiediately implies,using Equation (\ref{eqCdataset1}) and a triangle inequality, that $|C_{ij}| \leq 2$. Since we have shown the $C$-data set is composed of bounded quantities, then it immediately implies that the variance of the $C$-data set is bounded. This can be seen directly by noting that $\mathsf{Var}(X)=\mathbf{E}(X^2)-\mathbf{E}(X)^2$ and then using the monotonicity of the expectation value ( if $X<a$, then $\mathbf{E}(X) <a$ when it exists). Note that, as seen previously, the $C$-data set is composed of ${m \choose 2}=poly(n)$ elements $C_{ij}$, and since we have just shown that each $C_{ij}$ can be computed to $1/poly(n)$ precision using $poly(n)$ samples, then computing the entire $C$-data set also needs $poly(n)poly(n)=poly(n)$ samples.
    
    So far, we have discussed estimating quantities of interest in the case where we have a \emph{fixed} number of lost photons $l$, and showed that this can be done efficiently by using $poly(n)$ samples. However, since the losses follow a binomial distribution, some events, such as $l=0$ happen with \emph{exponentially} low probability $(1-\lambda)^n$, when $\lambda n\gg 1$, which is the case we are studying ($\lambda= o(1) \gg \frac{1}{n}$). Thus, collecting enough ($n_{0,U}=poly(n)$) statistics for this instance might take an exponential number of runs $K^{'}$ (see beginning of section \ref{tests}). Fortunately, by virtue of assumptions \ref{A2} and \ref{A4} (see section \ref{secnoise}) stating that all modes are equally lossy and all photons equally distinguishable, we could perform our tests only for  values of $l$ in the vicinity of the mean number of lost photons $\lceil \lambda n \rceil$ instead of for all values of $l$. Indeed, for the binomial distribution, most of the values of $l$ appearing in the $K^{'}$ experiments will belong to the interval $L_0=[\lceil \lambda n \rceil - l_0, \lceil \lambda n \rceil+l_0]$, where $l_0=o(\lceil \lambda n \rceil)$, which can be computed for example using Chernoff bounds \cite{Chebyshev}, depending on the level of confidence. If we focus on losses in the interval $L_0$, a $poly(n)$ number of experiments $K^{'}$ would be sufficient to get $n_{l,U}=poly(n)$ instances for each $l \in L_0$, allowing us to approximate to $1/poly(n)$ precision all of our desired quantities. Thus, the number of experiments $K^{'}$ needed is efficient. Note that, in order to compute $\lambda_U$ for $l \in L_0$, we can for example extrapolate it from the curve $P(l)={n \choose n-l}(1-\lambda_U)^{n-l}\lambda_U^{l}$ drawn for $l \in L_0$, instead of computing $\mathbf{E}(l)$ as is done in section \ref{sectloss}.

    \bigskip
    \bigskip
    To conclude this section, we remark that other techniques of certification of BosonSampling exist, such as those based on computing the $\mathsf{TVD}$ between the experimental BosonSampling distribution and the ideal one \cite{CGK+21}. Although arguably more straightforward to implement (in terms of needing only one test, that of evaluating the $\mathsf{TVD}$) and making no assumptions on the error model, we do not use such techniques mainly for two reasons. The first being that, unlike other proposals for computational quantum speedup such as those in \cite{BMS16}, for BosonSampling it is not known what the upper bound $\epsilon$ on the $\mathsf{TVD}$ between ideal and experimental  distributions should be in order to claim \emph{speedup} in the sense that no efficient classical algorithm can produce a distribution with $\mathsf{TVD} \leq \epsilon$, whereas for instantaneous quantum polynomial-time (IQP) circuits \cite{BMS16}, $\epsilon=1/192$ for example. The second, and arguably more severe, reason is that for noisy BosonSampling, the contributions of higher order interference terms (which are hard to compute) to the permanent seem to be suppressed exponentially \cite{Google21}. This means that one could in principle compute to good accuracy lower order interference terms (which are easy to compute), without caring too much about the accuracy of the higher order ones, and use these in order to \emph{spoof} noisy BosonSampling, by producing a $\mathsf{TVD}$ better than that of noisy BosonSampling. Indeed, this was what was done in \cite{Google21} for the case of Gaussian BosonSampling. In our case, the test $t_{d_4}$ can only be computed to the required precision by \emph{accurately} computing these high order interference terms (high order marginal probabilities), making spoofing attacks such as those in \cite{Google21} unlikely to pass $t_{d_4}$. More about this is to be said in coming sections.
    
    \section{The Photonic Quality Factor (PQF)}
    \label{PQF}
    \subsection{Definition}
    Having defined each of the tests $t_{loss}$, and $t_{d_1}$to $t_{d_4}$; in this section we introduce PQF, and argue that it is a good metric for assessing the performance of photonic NISQ devices, by showing how the tests for PQF are designed to expose many efficient classical simulation strategies for BosonSampling \cite{RSG18,OB18,ONF21,RMC18,Tichy14,UAM21,Gurvits2005,AH12,Google21}.
    
    To define PQF, we will first let $o(1)=O(\frac{1}{n^{\epsilon_1}})$ in the definition of $t_{loss}$, $o(\frac{1}{n})=O(\frac{1}{n^{1+\epsilon_2}})$ in the definitions of $t_{d_1}$ to $t_{d_3}$, and $o(\frac{1}{n^\gamma})=O(\frac{1}{n^{\gamma+\epsilon_3}})$ in the definition of $t_{d_4}$, where $\epsilon_i \in \mathbb{R}^{+*}$ for $i=1,2,3$ (see section \ref{tests} for definitions of $t_{loss}$, and $t_{d_1}$to $t_{d_4}$). Note that the larger the values of $\epsilon_i$, the more stringent the tests are, since an increase in these values means that we are testing for quantities which are closer to their ideal (lossless, identical photons) values. Let $\{\epsilon\}=\{\epsilon_1,\epsilon_2,\epsilon_3\}$. We will use the term noisy boson sampler of size $n$ to refer to a boson Sampler with $n$ input photons, and having some level of photon distinguishability and photon loss. 
    
    \begin{definition}
    \label{def1}
    $\emph{PQF}_{\gamma,\{\epsilon\}}$ is the maximum value of $n$ such that we can construct a noisy boson sampler of size $n$ with $m=n^{2+\gamma}$ modes whose collected output statistics pass all the tests $t_{loss}$ and $t_{d_1}$ to $t_{d_4}$.
    \end{definition}
     Note that Definition \ref{def1} captures a family of single number metrics parametrized by $\gamma$ (related to the mode size), and $\{\epsilon\}$ related to the tests $t_{loss}$ and $t_{d_1}$ to $t_{d_4}$ and how stringent we would like these tests to be. Also note that, by construction, $\textrm{PQF}_{\gamma,\{\epsilon\}}=+\infty$ for an ideal (noiseless) boson sampler with $\lambda=0$ and $x=1$. Finally, note that for a fixed $\gamma$ and $\{\epsilon\}$, one could use $\textrm{PQF}_{\gamma,\{\epsilon\}}$ as a metric to compare across different photonic hardware (as long as each of these hardware is based on single photon sources, linear optical circuits, and single photon detectors). It is an interesting open question to determine the parameters $\gamma$ and $\{\epsilon\}$, as well as the multiplicative constants (i.e.\  determining  $c$ in $O(f(n))=cf(n)$, for all tests), which are most suited for our hardware at Quandela, however this question is not pursued in this work.
     
     In the coming sections, we will show how several classical algorithms used to efficiently simulate noisy BosonSampling will fail one or more of our tests.  
     \subsection{Efficient classical algorithms, and how they fail our tests}
     
     \subsubsection{Simulated bosons and completely distinguishable particles}
     BosonSampling performed with completely distinguishable particles ($x=0)$ or with simulated bosons \cite{Tichy14} is efficiently simulable classically. It is therefore important that these two types of particles produce statistics which cannot pass our tests (for all $n \geq n_0$, where $n_0$ is some fixed size of a boson sampler). For the case of Bosonsampling with $n$ particles, neglecting losses, with $m\gg n^2$, and for both types of particles as well as for identical photons, Walschaers et al. \cite{W16} compute the values of $NM$, $CV$, and $S$  formally averaged over all Haar unitaries as
     \begin{equation*}
         NM_{Th,id} \approx -o(\frac{1}{n})-1,
     \end{equation*}
     \begin{equation*}
         CV_{Th,id}\approx \frac{2}{n}-1,
     \end{equation*}
     \begin{equation*}
         S_{Th,id} \approx 2-\frac{30}{n},
     \end{equation*}
     \begin{equation*}
         NM_{Th,d} \approx o(\frac{1}{n^2})+1,
     \end{equation*}
     \begin{equation*}
         CV_{Th,d}\approx -\sqrt{\frac{3}{n}},
     \end{equation*}
     \begin{equation*}
         S_{Th,d}\approx -\frac{26}{\sqrt{27}}\sqrt{\frac{1}{n}},
     \end{equation*}
     \begin{equation*}
         NM_{Th,sb} \approx NM_{th,id},
     \end{equation*}
     \begin{equation*}
         CV_{Th,sb} \approx \frac{1}{2n}-1,
     \end{equation*}
     and 
     \begin{equation*}
         S_{Th,sb} \approx 2-\frac{21}{n}.
     \end{equation*}
     Where, as before, the subscript $Th$ is used to indicate that the value is calculated analytically, and subscripts $id$, $d$, and $sb$ are used to differentiate between quantities computed for identical photons ($id$), completely distinguishable particles ($d$), and simulated bosons ($sb$). 
     
     Let $l \neq n$ be a given number of lost photons. In this case, the expressions for $NM$, $CV$, and $S$ for each of the above mentioned particle types can be obtained by replacing $n$ with $n-l$ in the above equations. 
     
     Consider the case of completely distinguishable particles. Performing the $t_{d_1}$ test for this case by using the expressions given in the above equations, we get
     $$|NM_{Th,d}-NM_{Th,id}|=2+o(\frac{1}{n-l})\gg o(\frac{1}{n}).$$
     Thus, completely distinguishable particles fail our $t_{d_1}$ test. By similar arguments, one could  show that these particles also fail the tests $t_{d_2}$ and $t_{d_3}$.
     
     For simulated bosons, performing the $t_{d_2}$ test yields
     $$|CV_{Th,sb}-CV_{Th,id}|=\frac{3}{2(n-l)} \geq \frac{3}{2n}\gg o(\frac{1}{n}).$$
     Thus, simulated bosons fail the $t_{d_2}$ test. One can similarly show that they also fail the $t_{d_3}$ test.

     \subsubsection{ Efficient classical algorithms for lossy BosonSampling}
     We will examine two classical simulation algorithms for lossy BosonSampling. The $poly(n)$-time algorithm of \cite{OB18} which \emph{weakly} simulates (samples from) lossy BosonSampling efficiently up to a precision $\epsilon(n)$ in the $\mathsf{TVD}$, where $\epsilon(n)$ is dependent on the size of the boson sampler. The second algorithm is that of 
     \cite{ONF21}, which can weakly simulate lossy BosonSampling up to \emph{any} $\mathsf{TVD}$ error $\epsilon$ by means of a $poly(n,\frac{1}{\epsilon})$-time algorithm, by using techniques based on matrix product states. Both of these efficient classical algorithms work in a regime of losses given by $$\lambda \approx 1-o(\frac{1}{\sqrt{n}}).$$
     Evidently, $$1-o(\frac{1}{\sqrt{n}})\gg o(1),$$ after some value of $n$, which means that the regime in which these classical algorithms operate fails our loss test $t_{loss}$.

     \subsubsection{ The efficient classical Algorithm of Renema et al.}
     The efficient classical algorithm of Renema et al. \cite{RSG18}, which is a generalization of earlier work by a subset of the Authors \cite{RMC18} dealing only with distinguishability, takes into account both photon loss and distinguishability. For given values of $\lambda$ and $x$, the key idea behind the algorithm of \cite{RSG18} is that noisy BosonSampling of size $n$ can be viewed as an ideal BosonSampling with $k$  identical photons, supplemented by $n-k$ totally distinguishable particles. The classical algorithm of \cite{RSG18} takes as input a precision $\epsilon$, a size $n$ of a noisy boson sampler, $\lambda$ and $x$, and a confidence level $0<\delta<1$. This algorithm can sample from a probability distribution with $\mathsf{TVD} \leq \epsilon$ with respect to the distribution of  noisy BosonSampling of size $n$, with probability at least $1-\delta$ over Haar random $m \times m$ unitaries $U$. The complexity of the classical algorithm is $O(k2^kn^k)$, where $k$ is given by \cite{RSG18}
     \begin{equation}
     \label{eqk}
     k=\lceil 2\frac{log(\epsilon)+log(\delta)+log(1-\alpha)}{log(\alpha)} \rceil,
     \end{equation}
     where $\alpha=(1-\lambda)x^2.$ If $\lambda$ and $x$ are constant, then for a fixed $\epsilon$ and $\delta$, $k$ is also constant and the algorithm is therefore $poly(n)$-time. We will now suppose that $\epsilon$ is small enough so that the statistics of the efficient classical algorithm can be thought of as statistics coming from a noisy BosonSampling of size $n$ with $\lambda$ and $x$ constant.
     
     Since $\lambda=constant$, it will fail our loss test $t_{loss}$ after some value of $n$. Also, if $x=constant$, and for a fixed loss $l \approx \lceil \lambda n \rceil$ in the vicinity of the mean value (which appears with high probability in our experiments, see section \ref{secsamplcomp}), Equation (\ref{eqtheoravph2}) can be used to show that 
     \begin{multline*}
    |\mathbf{E}_{U}(P_{K,l}(0_{K+1}\dots0_{m}))_{Th,id}-\mathbf{E}_{U}(P_{K,l}(0_{K+1}\dots0_{m}))_{Th,x}| \approx \\ (1-x^2)\frac{(n-l-1)(n-l)}{m}(1-O(\frac{(n-l-1)(m-K)}{m}) \\ \approx (1-x^2)\frac{(n-l-1)(n-l)}{m} \approx (1-x^2)(1-\lambda)^2\frac{n^2}{m}  \approx O(\frac{1}{n^{\gamma}})\gg o(\frac{1}{n^{\gamma}}).
    \end{multline*}
This means that  $x=constant$ (and $\lambda=constant$) fails our $t_{d_4}$ test, after some fixed value of $n$. In order to pass both $t_{loss}$ and $t_{d_4}$, we must have $x=1-o(1)=1-O(\frac{1}{n^{\beta}})$ and $\lambda=o(1)=O(\frac{1}{n^{\beta_2}})$ with $\beta, \beta_2 \in \mathbb{R}^{+*}$. This immediately implies (by substituting these values in Equation (\ref{eqk})) that $k$ will scale will $n$, and therefore the classical algorithm of \cite{RSG18} is no longer $poly(n)$-time. 

Note that a similar classical algorithm was given in \cite{Moylett} with a better run-time than that of Renema et al. \cite{RSG18}. However, the run-time of this algorithm still has an exponential dependence on $k$ \cite{Moylett}, and therefore is inefficient for our purposes by an argument similar to the one developed  in this section for Renema et al.'s algorithm. 
     
     \subsubsection{ The greedy sampler, and permanent approximation approaches inspired from Gurvits et al.}
     The greedy sampler was introduced in \cite{Google21}. Although used as a spoofing tool for Gaussian BosonSampling, it could in principle be generalized to our setting for BosonSampling, as remarked in \cite{Google21}. The idea behind this sampler is to produce $L$ $m$-bit strings with their $w$th order marginal distribution (i.e.\ the marginal distribution on $w$ modes, with $w \in \{1,..,m\}$) being $O(1/L)$ close to the $w$ th order distribution of an ideal boson sampler of size $n$ \cite{Google21}. The complexity of this algorithm is $O(m^w2^wL)$. The probability  $P_{K,l,U}(0_{K+1}\dots0_{m})$ at the heart of our $t_{d_4}$ test is a $w$th order marginal probability with $w=m-K=n-1$. Thus, the complexity of the greedy sampler algorithm needed to produce this probability with $1/poly(n)$ precision (to be able to pass $t_{d_4}$) is $O(m^n2^n/poly(n))$, which is exponential in $n$ and therefore not efficient.
     
     Similarly, one could think of using Gurvits' algorithm \cite{Gurvits2005,AH12} and try to brute force compute the probability $P_{K,l,U}(0_{K+1}\dots0_{m})$ by using the relation $$P_{K,l,U}(0_{K+1}\dots0_{m})=\sum_{s_1,\dots,s_K}P((s_1,\dots,s_K,0_{K+1},\dots0_{m})).$$
     Indeed, each element of the above sum is directly proportional to a permanent of an $n-l \times n-l$ matrix, and Gurvits' algorithm gives us a way of computing this permanent up to $1/poly(n-l)$ additive error in $poly(n-l)$-time. However, the number of terms of the above sum scales exponentially with $n-l$ for $m-K=n-1$. Therefore, the overall error on the approximation $P_{K,l,U}(0_{K+1}\dots0_{m})$ would be very high (exponential in $n-l$) and thus this approximation will fail $t_{d_4}$.
     
     Finally, it is worth noting that Gurvits \cite{AA11} gave another algorithm which \emph{exactly} computes the entire $w$th order marginal distribution in time $O((n-l)^{O(w)})$. However, in our case $w=n-1$, which makes this algorithm inefficient for producing $P_{K,l,U}(0_{K+1}\dots0_{m})$ which can pass $t_{d_4}$.
     \subsubsection{An attack based on leveraging known information}
     Because the expectation values over the Haar measure of our quantities of interest can be computed analytically exactly for  many types particles (identical photons, completely distinguishable particles,\dots) \cite{W16}, one could think of efficient classical adverserial strategies which use these analytically computed values to pass our developed tests. Here we develop one such strategy, designed to pass $t_{d_4}$, and show that this particular strategy will fail other tests such as $t_{d_1}$. For simplicity of discussion, we will assume $l=0$.
     
     
     Let $\mathcal{S}_1=\{s | s=\{s_1,\dots,s_K,0_{K+1},\dots0_{m}\}, s_1+\dots+s_K=n\}$, and $\mathcal{S}_2$ is such that $\mathcal{S}_1 \cap \mathcal{S}_2= \emptyset$, and $\mathcal{S}_1 \cup \mathcal{S}_2=\mathcal{S}_{m,n}$ (see section \ref{prelim}). Note that $|\mathcal{S}_{m,n}|={m \choose n}$, and $|\mathcal{S}_{1}|={K \choose n}$ \cite{AA11}. As before, $m-K=n-1$. Furthermore, let $$\alpha_n=\mathbf{E}_{U}(P_{K,0}(0_{K+1}\dots0_{m}))_{Th,id}=1-O(\frac{(n)(m-K)}{m}) \approx 1-O(\frac{1}{n^{\gamma}}).$$ Consider the following distribution
     \begin{equation}
     \label{eqDad}
     D_{ad}:=\{p(s_1)=\frac{\alpha_n}{|\mathcal{S}_1|}, p(s_2)=\frac{1-\alpha_n}{|\mathcal{S}_2|}  | s_1 \in \mathcal{S}_{1}, s_2 \in \mathcal{S}_2\}.
     \end{equation}
$D_{ad}$ can be sampled from efficiently classically, as its just a mixture of two uniform distributions.  An adverserial strategy where, for each given choice of $U$, bit strings are sampled from $D_{ad}$  can pass the test $t_{d_4}$. Indeed, computing
  \begin{multline*}
  \mathbf{E}_U(P_{K,0,ad}(0_{K+1}\dots0_{m}))=P_{K,0,ad}(0_{K+1}\dots0_{m})=\sum_{s_1 \in \mathcal{S}_1}p(s_1)=\alpha_n \\ =\mathbf{E}_{U}(P_{K,0}(0_{K+1}\dots0_{m}))_{Th,id}.
     \end{multline*}
     Thus,
     $$|\mathbf{E}_U(P_{K,0,ad}(0_{K+1}\dots0_{m}))-\mathbf{E}_{U}(P_{K,0}(0_{K+1}\dots0_{m}))_{Th,id}|=0<<o(\frac{1}{n^{\gamma}}),$$
     and therefore this adverserial strategy passes $t_{d_4}$. We will now show that this strategy fails the test $t_{d_1}$. But first, we prove the following theorem.
     \begin{theorem}
     \label{th1}
     $D_{ad}$ can be well approximated by the uniform distribution $D_{unif}:=\{p(s)=\frac{1}{|\mathcal{S}_{m,n}|}|s \in \mathcal{S}_{m,n}\}$ for large enough $n$, meaning that $\|D_{ad}-D_{unif}\|\leq O(\frac{1}{n^{\gamma}})$.
     \end{theorem}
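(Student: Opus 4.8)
The plan is to evaluate the total variation distance directly and collapse it to a single scalar quantity, then estimate that quantity asymptotically. The key structural observation is that both $D_{ad}$ and $D_{unif}$ are \emph{constant} on each of the two blocks $\mathcal{S}_1$ and $\mathcal{S}_2$ into which $\mathcal{S}_{m,n}$ is partitioned, so the defining sum of the TVD splits into just two pieces.

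First I would set $\beta_n := \frac{|\mathcal{S}_1|}{|\mathcal{S}_{m,n}|} = \frac{{K \choose n}}{{m \choose n}}$ (using $|\mathcal{S}_{m,n}| = {m \choose n}$ and $|\mathcal{S}_1| = {K \choose n}$ from the no-collision regime) and split the sum over $\mathcal{S}_1$ and $\mathcal{S}_2$. On $\mathcal{S}_1$ the contribution is $|\mathcal{S}_1|\,\big|\tfrac{\alpha_n}{|\mathcal{S}_1|}-\tfrac{1}{|\mathcal{S}_{m,n}|}\big| = |\alpha_n-\beta_n|$, and on $\mathcal{S}_2$ it is $|\mathcal{S}_2|\,\big|\tfrac{1-\alpha_n}{|\mathcal{S}_2|}-\tfrac{1}{|\mathcal{S}_{m,n}|}\big| = |(1-\alpha_n)-(1-\beta_n)|$, where I used $|\mathcal{S}_2|/|\mathcal{S}_{m,n}| = 1-\beta_n$. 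Hence
\[
\|D_{ad}-D_{unif}\| = \frac{1}{2}\Big(\big|\alpha_n-\beta_n\big| + \big|\beta_n-\alpha_n\big|\Big) = |\alpha_n-\beta_n|,
\]
so the whole problem reduces to bounding $|\alpha_n-\beta_n|$.

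Second I would estimate $\beta_n$. Writing the ratio of binomials as a product and using $K = m-n+1$ gives
\[
\beta_n = \prod_{i=0}^{n-1}\frac{K-i}{m-i} = \prod_{i=0}^{n-1}\Big(1-\frac{n-1}{m-i}\Big).
\]
Since $m-i \geq m-n+1$ for every $i \leq n-1$, each factor is at least $1-\frac{n-1}{m-n+1}$, so Bernoulli's inequality yields $\beta_n \geq \big(1-\frac{n-1}{m-n+1}\big)^n \geq 1-\frac{n(n-1)}{m-n+1}$. With $m = n^{2+\gamma}$ this gives $1-\beta_n \leq \frac{n(n-1)}{m-n+1} = O(1/n^{\gamma})$, i.e.\ $\beta_n = 1-O(1/n^{\gamma})$. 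Finally, recalling that by construction $\alpha_n = 1-O(\frac{n(m-K)}{m}) = 1-O(1/n^{\gamma})$, a triangle inequality closes the argument: $|\alpha_n-\beta_n| \leq |1-\alpha_n| + |1-\beta_n| = O(1/n^{\gamma})$.

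The only genuine work is the binomial estimate, and the main thing to be careful about there is making the heuristic $\beta_n \approx (1-n/m)^n \approx e^{-n^2/m} = e^{-1/n^\gamma}$ rigorous. I expect the Bernoulli bound to be the cleanest route, since it sidesteps the logarithmic expansion and delivers exactly the one-sided inequality $1-\beta_n = O(1/n^\gamma)$ that is needed—and a one-sided bound suffices because $\alpha_n, \beta_n \le 1$, so the distance from $1$ is all that matters.
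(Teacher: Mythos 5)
Your proof is correct and follows the same overall route as the paper's (split the TVD over the two blocks $\mathcal{S}_1$, $\mathcal{S}_2$ and reduce everything to the ratio $|\mathcal{S}_1|/|\mathcal{S}_{m,n}|$), but it improves on the paper's execution at two points. First, where the paper substitutes $\alpha_n \approx 1$ early and only obtains $\|D_{ad}-D_{unif}\| \approx 1 - |\mathcal{S}_1|/|\mathcal{S}_{m,n}|$, you observe that both distributions are constant on each block and derive the \emph{exact} identity $\|D_{ad}-D_{unif}\| = |\alpha_n - \beta_n|$ with $\beta_n = |\mathcal{S}_1|/|\mathcal{S}_{m,n}|$; this removes the approximation sign from the chain of estimates and makes the final triangle inequality $|\alpha_n-\beta_n|\le|1-\alpha_n|+|1-\beta_n|$ do the work cleanly, using that $\alpha_n = 1-O(1/n^\gamma)$ by construction. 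Second, where the paper cites the bosonic birthday paradox bound of Aaronson--Arkhipov to get ${K \choose n}/{K+n-1\choose n} > 1 - n^2/K$, you prove the equivalent estimate from scratch by writing $\beta_n$ as a product of $n$ factors each at least $1-\tfrac{n-1}{m-n+1}$ and applying Bernoulli's inequality, giving $1-\beta_n \le \tfrac{n(n-1)}{m-n+1} = O(1/n^\gamma)$ for $m=n^{2+\gamma}$. The paper's citation is shorter; your version is self-contained and makes explicit exactly which one-sided inequality is needed. Both land on the same $O(1/n^\gamma)$ bound.
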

     
     \bigskip
     \begin{proof}
  \begin{multline*}
  \|D_{ad}-D_{unif}\|=\frac{1}{2}\sum_{s} |p_{ad}(s)-p_{unif}(s)|=\frac{|\mathcal{S}_1|}{2}|\frac{\alpha_n}{|\mathcal{S}_1|}-\frac{1}{|\mathcal{S}_{m,n}|}|+\frac{|\mathcal{S}_2|}{2}|\frac{1-\alpha_n}{|\mathcal{S}_2|}-\frac{1}{|\mathcal{S}_{m,n}|}|.
     \end{multline*}
    For large $n$, $\alpha_{n} \approx 1$. Replacing this in the above expression and rearranging while noting that $|\mathcal{S}_2|=|\mathcal{S}_{m,n}|-|\mathcal{S}_1|$, we obtain  
    $$\|D_{ad}-D_{unif}\|\approx 1-\frac{|\mathcal{S}_1|}{|\mathcal{S}_{m,n}|}.$$
    
     Plugging in the fact that $m-K=n-1$, and $K=O(m)$, we get that
     $$\frac{|\mathcal{S}_1|}{|\mathcal{S}_{m,n}|}=\frac{{K \choose n}}{{K+n-1 \choose n}}>1-\frac{n^2}{K} \geq 1-O(\frac{1}{n^{\gamma}}).$$
     Where the last two terms in the above inequality follow from the bosonic birthday paradox bound \cite{AA11}, and from the fact that $K=O(m)=O(n^{2+\gamma}).$ Plugging this into the expression for $\mathsf{TVD}$, we get
     $$\|D_{ad}-D_{unif}\| \leq O(\frac{1}{n^{\gamma}}),$$ and the proof is complete.
      \end{proof}
     With Theorem \ref{th1} in hand, we will now use the (simpler to deal with) uniform distribution to compute what is needed for $t_{d_1}$. Let us firt compute $ \langle \mathbf{n}_i\mathbf{n}_j \rangle_{unif}$, the expectation value of the correlator $\langle \mathbf{n}_i\mathbf{n}_j \rangle $ under $D_{unif}$. Recall we are working in the no-collision regime, so 
     $$\langle \mathbf{n}_i\mathbf{n}_j \rangle_{unif}=p(1_i1_j).$$ where $p(1_i1_j)$  means the probability of getting one boson in mode $i$, and one boson in mode $j$. 
     
     $p(1_i1_j)=\frac{{m-2 \choose n-2}}{{m \choose n}} \approx \frac{1}{n^{2+2\gamma}}+(-1+\frac{1}{n^{1+\gamma}})\frac{1}{n^{3+2\gamma}}$ Thus,$$\langle \mathbf{n}_i\mathbf{n}_j \rangle_{unif} \approx \frac{1}{n^{2+2\gamma}}+(-1+\frac{1}{n^{1+\gamma}})\frac{1}{n^{3+2\gamma}} .$$
    By a similar calculation,
     $ \langle\mathbf{n}_i \rangle_{unif} \langle\mathbf{n}_j \rangle_{unif}=(p(1_i))^2 =(\frac{{m-1 \choose n-1}}{{m \choose n}} )^2 =\frac{1}{n^{2+2\gamma}}.$ Therefore,
     $$\mathbf{E}(C_{ij})=C_{ij,unif}=\mathbf{E}_{U}(C_{ij,unif})= \langle \mathbf{n}_i\mathbf{n}_j \rangle_{unif}-\langle \mathbf{n}_i \rangle_{unif} \langle\mathbf{n}_j\rangle_{unif} \approx (-1+\frac{1}{n^{1+\gamma}})\frac{1}{n^{3+2\gamma}} $$
     Finally, $NM_{unif}=\frac{m^2}{n}C_{ij,unif} \approx -1+\frac{1}{n^{1+\gamma}}$. By choosing the precision of the $t_{d_1}$ test to be $o(\frac{1}{n})=\frac{1}{n^{1+\epsilon_2}}$ with $\epsilon_2>\gamma$, and noting that $NM_{Th,id} \approx -1-o(\frac{1}{n})=-1-\frac{1}{n^{1+\gamma}}$ \cite{W16}; we get that $|NM_{unif}-NM_{Th,id}|=O(\frac{1}{n^{1+\gamma}})\gg \frac{1}{n^{1+\epsilon_2}}$, and therefore this adverserial strategy will fail the $t_{d_1}$ test.
     
     \bigskip
     As concluding remarks for this section, we stress that we do not rule out the existance of more sophisticated adverserial strategies capable of passing all our developed tests, and indeed this was not our goal here. Our goal was to show that the known efficient classical simulation strategies for noisy BosonSampling \emph{experiments} \cite{RSG18,OB18,ONF21} (characterized by a certain fixed photon loss and distinguishability) cannot pass our tests indefinitely, and we have obtained as a bonus that many adverserial strategies for spoofing BosonSampling cannot produce statistics which pass all our tests.

     \subsection{Passing $t_{loss}$ and $t_{d_4}$ is \emph{nessesary} for any BosonSampling experiment claiming quantum computational speedup}
     
     In this section, we will provide further evidence that PQF is a good metric for characterizing the performance of photonic NISQ devices, by showing that passing the tests $t_{loss}$ and $t_{d_4}$ (which are used in computing PQF as seen before) is \emph{nessesary} for \emph{any} BosonSampling experiment claiming a quantum computational speedup, in a sense we will now specify precisely. Consider the following definition of an efficient classical algorithm for (weakly) simulating BosonSampling.
     
     \begin{definition}
     \label{def2}
     Let $C$ be a classical algorithm which, for a given linear optical circuit $U$ (an $m \times m$ unitary) can sample from a probability distribution $D_{C,U} := \{p_{c,u}(s)|s \in \mathcal{S}_{m,n}\}$. Let $0<\epsilon<1$ and $0<\delta<1$ be \textbf{fixed} numbers. Furthermore, let $\tilde{D}_U=\{\tilde{p}(s)|s \in \mathcal{S}_{m,n}\}$ be the probability distribution sampled from a noisy boson sampler of size $n$, with a linear optical circuit $U$. 
     We say that $C$  \textbf{efficiently} weakly simulates this boson sampler if $\|\tilde{D}_U-D_{C,U}\|\leq \epsilon$ for at least a $1-\delta$ fraction of Haar random  $m \times m$ unitaries $U$, and $C$ is $poly(n)$-time.
     \end{definition}
     Note that this definition of classical simulability of BosonSampling differs from that of \cite{AA11} in two ways. The first being that in our definition $\epsilon$ is fixed, whereas in \cite{AA11} this $\epsilon$ is a variable, and $C$ is $poly(n,\frac{1}{\epsilon})$-time. The second being that we allow the algorithm to fail for some fraction of Haar random unitaries, parametrized by $\delta$. Also note that definitions of weak classical simulability similar to ours were used to claim quantum computational speedup for other families of sampling problems such as random quantum circuits and IQP circuits \cite{MH17}.

 We  say that experiments carried out with a noisy boson sampler of size $n$  admit a \emph{quantum computational speedup} if, for some fixed $\epsilon$ and $\delta$ in Definition \ref{def2}, \textbf{no} $poly(n)$-time classical algorithm $C$ exists which can efficiently weakly simulate this boson sampler, in the sense of Definition \ref{def2}.
 
For  given fixed values of $\epsilon$ and $\delta$, from the results of \cite{RSG18}, it can directly be seen that an algorithm $C$ exists which efficiently weakly simulates, in the sense of Definition \ref{def2}, a noisy boson sampler of size $n$ with $x=constant$ and $\lambda=constant$, and even for the case when $\alpha=x^2(1-\lambda) \to 0$  as $n \to \infty$ (i.e.\ $\lambda \to 1$ and/or $x \to 0$, as in this case $k \to 0$ in Equation (\ref{eqk}), see section \ref{tests} and \cite{RSG18}). Thus, any BosonSampling experiment hoping to demonstrate a quantum  computational speedup must necessarily have $\lambda=o(1)=O(\frac{1}{n^{\beta}}) \to 0$ and/or $x=1-o(1)=1-O(\frac{1}{n^{\beta_2}}) \to 1$, where $\beta, \beta_2 \in \mathbb{R}^{+*}$. It can immediately be seen that this value of $\lambda$ will pass $t_{loss}$ (see the definition of $t_{loss}$ in section \ref{tests}). For $x$, using Equation (\ref{eqtheoravph2}) for $l=\lceil \lambda n \rceil$ (and $\lambda=constant$ or $\lambda \to 0$), we get
     \begin{multline*}
    |\mathbf{E}_{U}(P_{K,l}(0_{K+1}\dots0_{m}))_{Th,id}-\mathbf{E}_{U}(P_{K,l}(0_{K+1}\dots0_{m}))_{Th,x}| \approx \\ (1-x^2)\frac{(n-l-1)(n-l)}{m}(1-O(\frac{(n-l-1)(m-K)}{m}) \\ \approx (1-x^2)\frac{(n-l-1)(n-l)}{m} \approx O(\frac{n^2}{mn^{\beta_2}})  \approx O(\frac{1}{n^{\gamma+\beta_2}})<<o(\frac{1}{n^{\gamma}}),
    \end{multline*}
which means that this value of $x$ passes $t_{d_{4}}$.

To summarize, any BosonSampling experiment demonstrating a quantum computational speedup must nessesarily pass either (or both) of $t_{loss}$ and $t_{d_4}$.

      \section{Sufficient noise levels to pass PQF tests}
    \label{experiment}
    
    A remaining question to answer is: what should the values of $\lambda$ and $x$ be for a noisy boson sampler of size $n$ to keep on passing our tests ? We answer this question by proving the following Theorem.
    \begin{theorem}
     \label{thexp}
     A noisy boson Sampler of size $n$ with $\lambda=o(1)$ and $x=1-o(\frac{1}{n^6})$  produces output statistics which pass  the tests $t_{loss}$ and $t_{d_1}$ to $t_{d_4}$ for all $n \geq n_0$, where $n_0 \in \mathbb{N}^{*}$.
    \end{theorem}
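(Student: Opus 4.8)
The strategy is to verify that the stated noise levels satisfy each of the five test conditions simultaneously, treating the tests as essentially independent constraints since $t_{loss}$ depends only on $\lambda$ while $t_{d_1}$ through $t_{d_4}$ depend on $x$ (through the fidelity $F = x^2$). First I would dispatch $t_{loss}$ immediately: the hypothesis $\lambda = o(1)$ is, by the very definition of the loss test in Section \ref{sectloss}, the pass condition, so nothing further is needed there. The bulk of the work lies with the distinguishability tests.

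For the high-order correlation test $t_{d_4}$, I would start from the exact deviation formula in Equation (\ref{eqtheoravph2}), which expresses the gap between the identical-photon and distinguishability-$x$ values of $\mathbf{E}_U(P_{K,l}(0_{K+1}\dots 0_m))$ as $(1-F)\frac{(n-l-1)(n-l)}{m}$ times a factor bounded by $1$. Plugging in $1-F = 1-x^2 = o(1/n^6)$ (using $x = 1 - o(1/n^6)$, so $1-x^2 = (1-x)(1+x) \approx 2(1-x) = o(1/n^6)$) and bounding $(n-l-1)(n-l) \leq n^2$ together with $m = n^{2+\gamma}$, the deviation is at most $o(1/n^6) \cdot n^2 / n^{2+\gamma} = o(1/n^{6+\gamma}) \ll o(1/n^\gamma)$, which comfortably satisfies the $t_{d_4}$ threshold. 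Here I expect the generous $n^6$ margin in the hypothesis to absorb everything with room to spare.

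The genuinely delicate part — and what I expect to be the main obstacle — is the trio of low-order tests $t_{d_1}$, $t_{d_2}$, $t_{d_3}$, which require the measured $NM$, $CV$, $S$ to be within $o(1/n)$ of their ideal values $NM_{Th,id}$, $S_{Th,id}$, $CV_{Th,id}$. Unlike $t_{d_4}$, the excerpt does not hand us a clean closed-form expression for how these moments of the $C$-data set vary with $x$; the values quoted from \cite{W16} are given only at the extreme points $x=1$ (identical), $x=0$ (distinguishable), and for simulated bosons. The plan is therefore to obtain a perturbative expansion of $NM$, $CV$, and $S$ in the small parameter $1-F = o(1/n^6)$ about the identical-photon point $F=1$, showing that each of these Haar-averaged moments is an analytic (or at least Lipschitz) function of $F$ with a derivative whose magnitude grows no faster than some fixed power of $n$. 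If I can establish that $|NM - NM_{Th,id}|$, and similarly for $CV$ and $S$, is bounded by $\mathrm{poly}(n)\cdot(1-F) = \mathrm{poly}(n)\cdot o(1/n^6)$, then choosing the $n^6$ suppression large enough relative to that polynomial degree forces the deviation below $o(1/n)$. The crux is thus controlling how badly the derivative of these normalized, ratio-type moments (especially the skewness $S$, which divides by a power of the variance that itself tends to zero like $1/n$) can blow up in $n$; I anticipate that the $n^6$ in the hypothesis is precisely calibrated to dominate the worst such polynomial factor, likely arising from $CV$ or $S$ whose denominators scale as $O(1/n)$ and $O(1/n^{3/2})$ respectively. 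Once the per-test deviation bounds are assembled, one picks $n_0$ large enough that every $o(\cdot)$ estimate has entered its asymptotic regime and all five inequalities hold simultaneously, completing the proof.
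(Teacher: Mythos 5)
Your treatment of $t_{loss}$ and $t_{d_4}$ matches the paper's exactly: $\lambda=o(1)$ passes the loss test by definition, and substituting $1-x^2=o(1/n^6)$ into Equation (\ref{eqtheoravph2}) with $m=n^{2+\gamma}$ gives a deviation of order $o(1/n^{6+\gamma})\ll o(1/n^{\gamma})$. You also correctly diagnose where the difficulty lies and even correctly predict that the exponent $6$ is forced by the small denominators of $CV$ and $S$ — in the paper it is precisely the skewness test $t_{d_3}$, whose denominator $\sigma^3\approx (n/m^2)^3$ forces the condition $\tilde{\varepsilon}^2\ll n^2/m^6$ and hence $x=1-O(1/n^{6+\gamma+\delta'})$.

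However, the central step of your plan — ``establish that each Haar-averaged moment is Lipschitz in $F$ with a derivative growing no faster than some fixed power of $n$'' — is exactly the part that requires a concrete mechanism, and you leave it as a conditional without supplying one. The paper's proof rests on a specific tool you do not identify: the expansion of the output probabilities in powers of the distinguishability, $P(s)=\sum_{i=0}^{n}C_{i,s}\,x^{i}$, from Renema et al.\ \cite{RMC18}, together with the Haar-averaged coefficient bound $\mathbf{E}_U(|C_{i,s}|)\leq n!/m^n$. Combined with a geometric-series estimate (Lemma \ref{lem}, controlling $\frac{1-x^{n+1}}{1-x}$ for $1-x=o(1/n^2)$), this yields the explicit perturbation bounds $|\mathbf{E}_U(C_{ij})_{x=1}-\mathbf{E}_U(C_{ij})_{x}|\leq O(1/n^{4+2\gamma+\delta})$ that are then propagated through $NM$, $CV$, and $S$. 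Without this (or an equivalent quantitative handle on how the Haar-averaged $C$-data set varies with $x$), the claimed polynomial Lipschitz bound is an assumption rather than a proof, and the argument for $t_{d_1}$ through $t_{d_3}$ remains incomplete. A secondary omission: you should also note, as the paper does, that losses are absorbed by replacing $n$ with $n_{rem}=n-\lceil\lambda n\rceil\approx n$, so that the distinguishability analysis can be carried out for a lossless $n$-photon sampler.
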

    We will prove Theorem \ref{thexp} in appendix \ref{secproof}, but we will briefly discuss the main technical tools involved in the proof here. Proving that the tests $t_{loss}$ and $t_{d_4}$ are passed by such a boson sampler is straightforward, by using the definition of $t_{loss}$ and Equation (\ref{eqtheoravph2}). To prove that this boson samplers statistics pass $t_{d_1}$ to $t_{d_3}$ is more involved. To do this, we use the expansion developed  in \cite{RSG18,RMC18} for the permanent of  a noisy boson sampler, as well as the upper bounds for the coefficients of this expansion averaged over the Haar measure of $m \times m$ unitaries \cite{RMC18}. We use these to compute bounds on the values of $NM$,$CV$, $S$ in the case where the boson sampler has a distinguishability $x$, which is a generalization of the computations of these quantities carried out in \cite{W16} for the ideal case ($x=1$) and which are mentioned in section \ref{PQF}.
    
    To conclude this section, we note that the values of $x$ and $\lambda$ needed to pass all our tests in Theorem \ref{thexp} give a $\textrm{PQF}_{\gamma,\{\epsilon\}}=+\infty$ for some $\gamma$ and $\{\epsilon\}$ (see section \ref{PQF}). Although these values may seem very stringent from an experimental point of view, we note that these are analytically computed values based on approximations and upper bounds. It might be that numerical explorations of this problem can give much more practical values of $x$
     and $\lambda$ that pass all our tests. We leave such explorations for future work.
     
     \section{Discussion}
     \label{discussion}
     To summarize, we have introduced a single number metric, the Photonic Quality Factor (PQF), and presented evidence that it is a reliable metric for assessing the average performance of a noisy photonic quantum device based on single-photon sources, linear optical circuits, and single-photon detectors, in which the main sources of noise are photon loss and distinguishability.
     Several interesting questions and directions present themselves, the most immediate being that other sources of noise can also be considered.
     
     Numerical explorations could lead to less stringent bounds on photon loss ($\lambda$) and distinguishability ($x$) for passing the tests, which may lead to more experimentally-friendly targets. The idea of well-motivated weakening of the requirements for the tests used to evaluate PQF could also be pursued, with the goal of deriving more easily attainable yet still meaningful values of $\lambda$ and $x$. 
     
     We can also consider how the metric could be made more applicable beyond photonic quantum computing for other hardware, similar to benchmarks developed in \cite{B1,B2,B3,qscore}. A first step in this direction could be the work of \cite{Moylett2}, where it is shown how to simulate BosonSampling using a quantum circuit composed of quantum gates. Since our tests are designed to assess the quality of BosonSampling experiments, it may be that simulating BosonSampling using the technique of \cite{Moylett2}, then performing our developed tests on the output statistics can give us an indication about the quality of the set of gates used in this simulation. However, further investigation is required to be able to make concrete claims, and thus we leave this as an avenue for future investigation.
     
     Our results, similar to those of \cite{RSG18,RaulGeom2020,RMC18}, highlight the fact that some error correction and mitigation techniques must be introduced after some value of $n$, otherwise the boson sampler of size $n$ would become efficiently classically simulable. This is manifested in the fact that $x$ must tend to one (identical photons), and $\lambda$ to zero (lossless regime) as $n$ gets larger, to keep on passing our developed tests (see Theorem \ref{thexp}). In this direction, it would be interesting to work out what value of PQF is needed to perform a useful computation using a photonic NISQ device, where no error correction is yet available \cite{NISQ}. This could give us an indication of whether something useful could be done with these devices, or whether we would have to wait for error corrected versions of these devices.
     
     Finally, a future direction we will pursue is generalizing the noise models introduced here to include more sources of error, such as dark counts of the  single-photon detector for example \cite{S2020}, or perhaps including some time dependence in the noise, as well as some dependence of the noise on the geometry of the linear optical circuits \cite{RaulGeom2020}, all with the goal of developing more sophisticated techniques to benchmark realistic photonic devices where a large set of errors come into play in non-trivial ways.
     \section*{Acknowledgements}
     We thank Frederic Grosshans, Niccolo Somaschi, Nicolas Maring, and Andreas Fyrillas for fruitful discussions.

\onecolumn\newpage
\appendix

\section{Proof of Theorem \ref{thexp}}
\label{secproof}
We first begin by noting that $\lambda=o(1)$ will pass $t_{loss}$, from the definition of this test (see section \ref{sectloss}). We will now turn to the tests $t_{d_1}$
to $t_{d_4}$. For large enough $n \geq n_0$, $n_{rem}=n-\tilde{l} \approx n-\lceil \lambda n \rceil \approx n$, where $\tilde{l}$  is the mean number of lost photons, and $n_{rem}$ the mean number of remaining photons. Thus, we will take $n_{rem} \approx n$; meaning that we will prove Theorem \ref{thexp} for the case of $n$ single photons with distinguishability $x \in [0,1]$.

For $t_{d_4}$, by using Equation (\ref{eqtheoravph2}) for $x=1-o(\frac{1}{n^6})$
\begin{multline*}
    \mathbf{E}_{U}(P_{K,0}(0_{K+1}\dots0_{m}))_{Th,id}-\mathbf{E}_{U}(P_{K,0}(0_{K+1}\dots0_{m}))_{Th,x}=\\ (1-x^2)\frac{(n-1)(n)}{m}\mathbf{E}_{U}(P_{K,1}(0_{K+1}\dots0_{m}))_{Th,id} \approx O(\frac{n^2}{n^6m}) \\ \approx O(\frac{1}{n^{6+\gamma}})<<o(\frac{1}{n^{\gamma}}).
\end{multline*}
Thus, this value of $x$ will pass $t_{d_4}$. 

Now we turn to the tests $t_{d_1}$ to $t_{d_3}$. Before we proceed we will prove the following Lemma.
\begin{lemma}
\label{lem}
For $\beta:=1-x=o(\frac{1}{n^2})$, we  have that
\begin{multline*}
  \frac{1-x^{n+1}}{1-x}=n+1-\frac{n(n+1)}{2}(1-x) +\kappa(x)  \approx n+1- \frac{n(n+1)}{2}(1-x), 
\end{multline*}
for $n \geq n_0$, and where $|\kappa(x)|=o(\frac{n(n+1)}{2}(1-x))$.
\end{lemma}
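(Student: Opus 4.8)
The plan is to recognize the left-hand side as a finite geometric series and to expand it in powers of $\beta=1-x$. Writing $x=1-\beta$ and using $\frac{1-x^{n+1}}{1-x}=\sum_{k=0}^{n}x^{k}=\sum_{k=0}^{n}(1-\beta)^{k}$, I would apply the binomial theorem to each summand and interchange the two (finite) sums. The coefficient of $\beta^{j}$ that emerges is $(-1)^{j}\sum_{k=0}^{n}\binom{k}{j}=(-1)^{j}\binom{n+1}{j+1}$ by the hockey-stick identity. Reading off the first two terms gives $\binom{n+1}{1}=n+1$ at order $\beta^{0}$ and $-\binom{n+1}{2}\beta=-\frac{n(n+1)}{2}\beta$ at order $\beta^{1}$, which are exactly the two explicit terms in the statement. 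I would then \emph{define} $\kappa(x):=\sum_{j\geq 2}(-1)^{j}\binom{n+1}{j+1}\beta^{j}$ as the collection of all higher-order contributions, so that the claimed equality holds identically.

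The remaining and only real work is to show $|\kappa(x)|=o\!\left(\frac{n(n+1)}{2}\beta\right)$. I would bound $\binom{n+1}{j+1}\leq\frac{(n+1)^{j+1}}{(j+1)!}$ and set $u:=(n+1)\beta$, which yields
\begin{equation*}
|\kappa(x)|\leq\sum_{j\geq 2}\frac{(n+1)^{j+1}}{(j+1)!}\beta^{j}=(n+1)\sum_{j\geq 2}\frac{u^{j}}{(j+1)!}\leq (n+1)\bigl(e^{u}-1-u\bigr).
\end{equation*}
Since $\beta=o(1/n^{2})$ forces $u=(n+1)\beta=o(1/n)\to 0$, for $n\geq n_{0}$ the tail satisfies $e^{u}-1-u\leq u^{2}$, so $|\kappa(x)|\leq (n+1)u^{2}=(n+1)^{3}\beta^{2}$ up to the constant. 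Dividing by $\frac{n(n+1)}{2}\beta$ gives a ratio of order $(n+1)\beta=u\to 0$, which is precisely the claim $|\kappa(x)|=o\!\left(\frac{n(n+1)}{2}(1-x)\right)$; dropping $\kappa$ then leaves the stated approximation.

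The step I expect to be the main obstacle is controlling the infinite tail cleanly: one must justify convergence of $\sum_{j\geq2}\binom{n+1}{j+1}\beta^{j}$ (which is immediate once it is dominated by $e^{u}$ for $u<1$) and, more importantly, verify that the smallness hypothesis $\beta=o(1/n^{2})$ is exactly what makes $u=(n+1)\beta\to 0$, so that both the elementary comparison $e^{u}-1-u\leq u^{2}$ and the final ratio estimate are valid for all sufficiently large $n$. All other steps are routine series manipulations.
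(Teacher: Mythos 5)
Your proof is correct and follows essentially the same route as the paper: expand the geometric series $\sum_{k=0}^{n}(1-\beta)^{k}$ binomially in $\beta$, identify the $\beta^{0}$ and $\beta^{1}$ coefficients, and bound the order-$\geq 2$ remainder. The only differences are cosmetic improvements in execution: you use the hockey-stick identity to package the coefficient of $\beta^{j}$ as $(-1)^{j}\binom{n+1}{j+1}$ (the paper leaves it as a double sum it manipulates by hand), and your tail estimate via $\binom{n+1}{j+1}\leq\frac{(n+1)^{j+1}}{(j+1)!}$ and $e^{u}-1-u\leq u^{2}$ gives the sharper and cleaner bound $|\kappa(x)|\leq(n+1)^{3}\beta^{2}$, whereas the paper bounds each term crudely by $O(\beta^{k}n^{k+1})$ and sums $n-1$ of them; both yield the required $o\bigl(\tfrac{n(n+1)}{2}\beta\bigr)$.
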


\begin{proof}
$\frac{1-x^{n+1}}{1-x}$ can be written as the following geometric series
\begin{equation}
\label{eqlem1}
\frac{1-x^{n+1}}{1-x}=\sum_{i=0,..,n}x^{i}=1+\sum_{i=1,\dots,n}(1-\beta)^{i}.
\end{equation}
We can write
\begin{equation}
    \label{eqlem2}
    \sum_{i=1,\dots,n}(1-\beta)^{i}=\sum_{i=1,..,n}\sum_{j=0,..,i}(-\beta)^{i-j}{i \choose j}.
\end{equation}
performing the relabelling $i-j:=k$ in Equation (\ref{eqlem2}) and replacing this is Equation (\ref{eqlem1}) we obtain
\begin{equation}
  \label{eqlem3}
  \frac{1-x^{n+1}}{1-x}=1+\sum_{i=1,..,n}\sum_{k=0,..,i}(-\beta)^{k}{i \choose i-k}.
\end{equation}
   Equation (\ref{eqlem3}) can be rewritten as

\begin{multline}
  \label{eqlem4}
  \frac{1-x^{n+1}}{1-x}=1+(-\beta)^{0}\sum_{j=1,\dots,n}{j \choose j}-\beta \sum_{j=1,\dots,n}{j \choose j-1}+ \sum_{k=2,..,n}(-\beta)^{k}\sum_{j=k,\dots,n}{j \choose j-k} \\= n+1-\frac{n(n+1)}{2}\beta+\sum_{k=2,..,n}(-\beta)^{k}\sum_{j=k,\dots,n}{j \choose j-k}=n+1-\frac{n(n+1)}{2}\beta+\kappa(x),
\end{multline}
where $\kappa(x):=\sum_{k=2,..,n}(-\beta)^{k}\sum_{j=k,\dots,n}{j \choose j-k}$.

Now, 
$$(\beta)^{k}\sum_{j=k,\dots,n}{j \choose j-k} = (\beta)^{k} \sum_{j=k,\dots,n} \frac{j!} {(j-k)!k!} \leq (\beta)^{k} \sum_{j=k,\dots,n} j^k \leq (\beta)^{k} \sum_{j=k,\dots,n} n^k \leq O((\beta)^{k}n^{k+1}). $$
 For $\beta=o(\frac{1}{n^2})=O(\frac{1}{n^{2+\epsilon}})$, we have $O((\beta)^{k}n^{k+1})=O(\frac{1}{n^{k+\epsilon k-1}})=o(\frac{1}{n^{1+\epsilon}})$
for $k>1$. Thus,
$$|\kappa(x)|=|\sum_{k=2,..,n}(-\beta)^{k}\sum_{j=k,\dots,n}{j \choose j-k}| \leq (n-1)o(\frac{1}{n^{1+\epsilon}})=o(\frac{1}{n^{\epsilon}})=o(\frac{n(n+1)}{2}\beta)=o(\frac{n(n+1)}{2}(1-x))$$
This completes the proof of Lemma \ref{lem}.
\end{proof}

We will now start by computing $\mathbf{E}_U(C_{ij})_x$, for a given distinguishability $x \in [0,1]$ of the $n$ single photons (see section \ref{tests}). Henceforth, we will use the approximation ${m \choose n} \approx \frac{m^n}{n!}$ (note that this was also used in \cite{Moylett}). For $s=\{s_1,\dots,s_m\}$ and $\sum_{i=1,..m}s_i=n$, we will use the following expansion of $P(s)$, the probability to observe the output $s$ of the boson sampler, derived in \cite{RMC18}
$$P(s)=\sum_{i=0,..,n}C_{i,s} x^{i},$$
 $C_{i,s}$ are expansion coefficients satisfying \cite{RMC18}

$$\mathbf{E}_U(|C_{i,s}|) \leq \frac{n!}{m^n}.$$

\begin{multline*}
    \mathbf{E}_U(\langle \mathbf{n}_i\mathbf{n}_j \rangle)_x=\sum_{s:=\{s_1,\dots,s_{i-1},1_i,s_{i+1},..,s_{j-1},1_j,s_{j+1},\dots,s_m\}}\mathbf{E}_U(p(s)) \\ = \sum_{s:=\{s_1,\dots,s_{i-1},1_i,s_{i+1},..,s_{j-1},1_j,s_{j+1},\dots,s_m\}} \sum_{i=0,..,n}\mathbf{E}_U(C_{i,s})x^{i} \\ \leq \sum_{s:=\{s_1,\dots,s_{i-1},1_i,s_{i+1},..,s_{j-1},1_j,s_{j+1},\dots,s_m\}} \sum_{i=0,..,n}\mathbf{E}_U(|C_{i,s}|)x^{i} \\ \leq \sum_{s:=\{s_1,\dots,s_{i-1},1_i,s_{i+1},..,s_{j-1},1_j,s_{j+1},\dots,s_m\}} \sum_{i=0,..,n}\frac{n!}{m^n}x^{i} \\ \leq \sum_{s:=\{s_1,\dots,s_{i-1},1_i,s_{i+1},..,s_{j-1},1_j,s_{j+1},\dots,s_m\}} \frac{n!}{m^n} \frac{1-x^{n+1}}{1-x} \\ \leq {m-2 \choose n-2 } \frac{n!}{m^n} \frac{1-x^{n+1}}{1-x}.
\end{multline*}
Note  that ${m-2 \choose n-2} \approx \frac{n^2}{m^2} {m \choose n} \approx \frac{1}{n^{2+2\gamma}} {m \choose n} \approx \frac{1}{n^{2+2\gamma}} \frac{m^n}{n!}. $
Replacing this in the above equation gives

\begin{equation}
  \label{eqninj}
  \mathbf{E}_U(\langle \mathbf{n}_i\mathbf{n}_j \rangle)_x \leq \frac{1}{n^{2+2\gamma}}\frac{1-x^{n+1}}{1-x}.
\end{equation}
A similar calculation for $\mathbf{E}_U(\langle \mathbf{n}_i \rangle \langle \mathbf{n}_j\rangle)_x=\mathbf{E}_{U}(p(1_i)p(1_j))$ gives
\begin{equation}
    \label{eqnitimesnj}
    \mathbf{E}_U(\langle \mathbf{n}_i\rangle \langle \mathbf{n}_j\rangle)_x \leq \frac{1}{n^{2+2\gamma}}(\frac{1-x^{n+1}}{1-x})^2,
\end{equation}
where $p(1_i)=\sum_{s:=\{s_1,\dots,s_{i-1},1_i,s_{i+1},\dots,s_m\}}p(s)$ is the probability to have one photon in mode $i$ (similarly for $p(1_j)$).

Now, we compute 

\begin{multline}
    \label{eqcijapp}
    |\mathbf{E}_U(C_{ij})_{x=1}-\mathbf{E}_U(C_{ij})_{x}| \leq |\mathbf{E}_U(\langle \mathbf{n}_i\mathbf{n}_j\rangle )_{x=1}-\mathbf{E}_U(\langle \mathbf{n}_i\mathbf{n}_j\rangle_{x})|+\\ |\mathbf{E}_U(\langle \mathbf{n}_i\rangle \langle \mathbf{n}_j \rangle_{x=1})-\mathbf{E}_U(\langle \mathbf{n}_i\rangle \langle \mathbf{n}_j \rangle_{x})|.
\end{multline}

\begin{multline}
\label{eqfirstpart}
    |\mathbf{E}_U(\langle \mathbf{n}_i\mathbf{n}_j\rangle _{x=1})-\mathbf{E}_U(\langle \mathbf{n}_i\mathbf{n}_j\rangle_{x})|=|\sum_{s:=\{s_1,\dots,s_{i-1},1_i,s_{i+1},..,s_{j-1},1_j,s_{j+1},\dots,s_m\}} \sum_{i=0,..,n}\mathbf{E}_U(C_{i,s})(1-x^{i})| \leq \\ \sum_{s:=\{s_1,\dots,s_{i-1},1_i,s_{i+1},..,s_{j-1},1_j,s_{j+1},\dots,s_m\}} \sum_{i=0,..,n}\mathbf{E}_U(|C_{i,s}|)(1-x^{i}) \leq \\ \sum_{s:=\{s_1,\dots,s_{i-1},1_i,s_{i+1},..,s_{j-1},1_j,s_{j+1},\dots,s_m\}} \sum_{i=0,..,n}\frac{n!}{m^n}(1-x^{i}) \\ \leq  \frac{1}{n^{2+2\gamma}}(n+1-\frac{1-x^{n+1}}{1-x}).
\end{multline}
Where the last part of this Equation is obtained by a similar calculation to that in Equation (\ref{eqninj}). Also by a similar calculation we get
\begin{equation}
    \label{eqsecpart}
    |\mathbf{E}_U(\langle \mathbf{n}_i \rangle \langle \mathbf{n}_j \rangle_{x=1})-\mathbf{E}_U(\langle \mathbf{n}_i \rangle \langle \mathbf{n}_j\rangle_{x})| \leq \frac{1}{n^{2+2\gamma}}((n+1)^2-(\frac{1-x^{n+1}}{1-x})^2)
\end{equation}
Replacing these in Equation (\ref{eqcijapp}) gives
\begin{multline}
    \label{eqcijapp2}
    |\mathbf{E}_U(C_{ij})_{x=1}-\mathbf{E}_U(C_{ij})_{x}| \leq \frac{1}{n^{2+2\gamma}}(n+1-\frac{1-x^{n+1}}{1-x}) + \frac{1}{n^{2+2\gamma}}((n+1)^2-(\frac{1-x^{n+1}}{1-x})^2).
\end{multline}
The above Equation directly implies (from the monotonicity of the expectation value)
\begin{multline}
    \label{eqcijapp3}
    |\mathbf{E}(\mathbf{E}_U(C_{ij})_{x=1})-\mathbf{E}(\mathbf{E}_U(C_{ij})_{x})| \leq \frac{1}{n^{2+2\gamma}}(n+1-\frac{1-x^{n+1}}{1-x}) + \frac{1}{n^{2+2\gamma}}((n+1)^2-(\frac{1-x^{n+1}}{1-x})^2).
\end{multline}
Here we use $\mathbf{E}(.)$ to denote the average over the $C$-data set which contains ${m \choose 2} \approx m^2$ terms $C_{ij}$ (see section \ref{tests}).
Multiplying both sides of Equation (\ref{eqcijapp3}) by $\frac{m^2}{n}$, we obtain
\begin{multline}
\label{eqnm1}
 |NM_{Th,id}-NM_{x}| \leq \frac{m^2}{n}( \frac{1}{n^{2+2\gamma}}(n+1-\frac{1-x^{n+1}}{1-x}) + \frac{1}{n^{2+2\gamma}}((n+1)^2-(\frac{1-x^{n+1}}{1-x})^2)).
\end{multline}
Using Lemma \ref{lem} and replacing $m=n^{2+2\gamma}$ and $x=1-O(\frac{1}{n^{2+\epsilon}})$, we get after a long but straightforward calculation that 
$$|NM_{Th,id}-NM_{x}| \leq O(\frac{1}{n^{-2+\epsilon}}).$$
In order to pass $t_{d_1}$, we must have $-2+\epsilon>1$, that is, $\epsilon=3+\delta$, with $\delta \in \mathbb{R}^{+*}$. So far, we have passed $t_{d_1}$ with an $x$ that looks like $$x=1-O(\frac{1}{n^{5+\delta}})=1-o(\frac{1}{n^5}).$$ We still however need to pass $t_{d_2}$ and $t_{d_3}$.

Plugging $x$ into Equation (\ref{eqcijapp2}) and using Lemma \ref{lem}, we can directly see that 
$$\mathbf{E}_U(C_{ij})_x=\mathbf{E}_U(C_{ij})_{x=1}+\epsilon_{ij},$$
where $$|\epsilon_{ij}| \leq O(\frac{1}{n^{4+2\gamma+\delta}}).$$
Similarly,

$$\mathbf{E}(\mathbf{E}_U(C_{ij}))_x=\mathbf{E}(\mathbf{E}_U(C_{ij}))_{x=1}+\tilde{\epsilon},$$
where $$|\tilde{\epsilon}| \leq \sum_{i,j}\frac{|\epsilon_{ij}|}{m^2} \leq O(\frac{1}{n^{4+2\gamma+\delta}}) <<|\mathbf{E}(\mathbf{E}_U(C_{ij})_{x=1})|=\frac{n}{m^2}|NM_{Th,id}|=\frac{1}{n^{3+2\gamma}}.$$
Where we have used $NM_{Th,id} \approx -1$ (see section \ref{PQF}).

In a slight approximation which will ease calculation, we will take $$\epsilon_{ij}-\tilde{\epsilon} \approx \tilde{\varepsilon},$$ where $|\tilde{\varepsilon}| \leq O(\frac{1}{n^{4+2\gamma+\delta}})$, and where we will assume $\tilde{\varepsilon}$ is independent of $i,j$.

With this new notation in hand, we now go on to evaluating $CV$ for the $t_{d_2}$ test.
\begin{multline}
\label{eqcv1}
|CV^2_{x=1}-CV^2_{x}| \approx |\frac{\frac{1}{m^2}\sum_{ij}(\mathbf{E}_U(C_{ij})_{x=1}-\mathbf{E}(\mathbf{E}_U(C_{ij}))_{x=1})^2-(\mathbf{E}_U(C_{ij})_{x}-\mathbf{E}(\mathbf{E}_U(C_{ij}))_{x})^2}{(\mathbf{E}(\mathbf{E}_U(C_{ij}))_{x=1})^2}|.
\end{multline}
We have used the approximation $(\mathbf{E}(\mathbf{E}_U(C_{ij}))_{x=1})^2\approx (\mathbf{E}(\mathbf{E}_U(C_{ij}))_{x})^2$ to write the above expression using a single denominator. This approximation is valid since, as seen before,  $|\tilde{\epsilon}|<<|\mathbf{E}(\mathbf{E}_U(C_{ij}))_{x=1}|$. 

Plugging in the above defined $\epsilon_{ij}$, $\tilde{\epsilon}$, and $\tilde{\varepsilon}$ in Equation (\ref{eqcv1}) we get
\begin{multline}
\label{eqcv2}
|CV^2_{x=1}-CV^2_{x}| \approx \\ |\frac{\frac{1}{m^2}\sum_{ij}(\mathbf{E}_U(C_{ij})_{x=1}-\mathbf{E}(\mathbf{E}_U(C_{ij}))_{x=1})^2-(\mathbf{E}_U(C_{ij})_{x=1}-\mathbf{E}(\mathbf{E}_U(C_{ij}))_{x=1}-\tilde{\varepsilon})^2}{\frac{n^2}{m^4}}| \leq \\ \approx \frac{m^4}{n^2}\tilde{\varepsilon}^2.
\end{multline}
Where the rightmost part of this equation is obtained by expanding $(\mathbf{E}_U(C_{ij})_{x=1}-\mathbf{E}(\mathbf{E}_U(C_{ij}))_{x=1}-\tilde{\varepsilon})^2$ while noting that $\sum_{ij}\mathbf{E}_U(C_{ij})_{x=1}-\mathbf{E}(\mathbf{E}_U(C_{ij}))_{x=1}=m^2\mathbf{E}(\mathbf{E}_U(C_{ij}))_{x=1}-m^2\mathbf{E}(\mathbf{E}_U(C_{ij}))_{x=1}=0$.
Now, $$\frac{m^4}{n^2}\tilde{\varepsilon}^2 \leq O(\frac{1}{n^{2+2\delta}}) <<o(\frac{1}{n}),$$ and therefore, 
we pass the test $t_{d_2}$ for $x=1-O(\frac{1}{n^{5+\delta}})=1-o(\frac{1}{n^5}).$

Finally, for Skewness

\begin{multline}
\label{eqs1}
S_x = \frac{\sum_{ij}(\mathbf{E}_U(C_{ij})_{x=1}-\mathbf{E}(\mathbf{E}_U(C_{ij})_{x=1})-\tilde{\varepsilon})^3}{(\sum_{ij}(\mathbf{E}_U(C_{ij})_{x=1}-\mathbf{E}(\mathbf{E}_U(C_{ij})_{x=1})-\tilde{\varepsilon})^2)^{\frac{3}{2}}} \\=\frac{\sum_{ij}(\mathbf{E}_U(C_{ij})_{x=1}-\mathbf{E}(\mathbf{E}_U(C_{ij})_{x=1}))^3-f_1(n)}{(\sum_{ij}(\mathbf{E}_U(C_{ij})_{x=1}-\mathbf{E}(\mathbf{E}_U(C_{ij})_{x=1}))^2+f_2(n))^{\frac{3}{2}}}.
\end{multline}
with $f_1(n)=3\tilde{\varepsilon}\sum_{ij}(\mathbf{E}_U(C_{ij})_{x=1}-\mathbf{E}(\mathbf{E}_U(C_{ij})_{x=1}))^2+m^2\tilde{\varepsilon}^3,$ and $f_2(n)=m^2\tilde{\varepsilon}^2.$ The rightmost side of Equation (\ref{eqs1}) is obtained by a direct expansion of the numerator and denominator.

In order for $|S_x-S_{x=1}|=o(\frac{1}{n})$, and therefore for us to pass $t_{d_3}$, we need first that 
$$|f_2(n)|<<\sum_{ij}(\mathbf{E}_U(C_{ij})_{x=1}-\mathbf{E}(\mathbf{E}_U(C_{ij})_{x=1}))^2$$ so that we can write the above difference with a common denominator of $\sum_{ij}(\mathbf{E}_U(C_{ij})_{x=1}-\mathbf{E}(\mathbf{E}_U(C_{ij})_{x=1}))^2:=\sigma^2$.
From the definition of $CV$, we have $\sigma^2=(\mathbf{E}(\mathbf{E}_U(C_{ij}))_{x=1})^2.CV_{Th,id}=\frac{n^2}{m^4}(NM_{Th,id}CV_{Th,id})^2 \approx \frac{n^2}{m^4}$ (see section \ref{PQF} for values of $NM_{Th,id}$ and $CV_{Th,id}$). Therefore, we need
$$|\tilde{\varepsilon}^2| << \frac{n^2}{m^6},$$ or 

$$O(\frac{1}{n^{8+4\gamma+2\delta}}) <<  \frac{1}{n^{10+6\gamma}}.$$
If we choose $\delta>1+\gamma=1+\gamma+\delta^{'},$ where $\delta^{'} \in \mathbb{R}^{+*}$, we get that the above inequality is verified. Note that this makes our value of $x$ to be $$x=1-o(\frac{1}{n^6})=1-O(\frac{1}{n^{6+\gamma+ \delta^{'}}}).$$ Now, we can write 
$$|S_{x}-S_{x=1}| \approx \frac{|f_1(n)|}{(\sigma^2)^{1.5}} \approx \frac{m^6}{n^3}|f_{1}(n)|.$$ Expanding 
$$\frac{m^6}{n^3}|f_{1}(n)| \approx 3\tilde{\varepsilon}\frac{m^6}{n^3}\frac{n^2}{m^4}+\frac{m^8}{n^3} \tilde{\varepsilon}^3 \approx O(\frac{1}{n^{2+\gamma+\delta^{'}}})+O(\frac{1}{n^{2+\gamma+3\delta^{'}}})<<o(\frac{1}{n}),$$
meaning that we pass $t_{d_3}$.

\bigskip 
To summarize, if $x=1-o(\frac{1}{n^6})$, this value of $x$ can pass all tests $t_{d_1}$ to $t_{d_3}$. This completes the proof of Theorem \ref{thexp}.

\section{The routing circuit}
\label{approuting}
Here we show a linear optical circuit for transforming the Fock state $|s_1,\dots,s_{m}\rangle$
with $s_{i} \in \{0,1\}$, $\sum_{i}s_i=n-l$, and $s_{n+1},\dots,s_{m}=0$ onto the \emph{canonical} state $|1_1,\dots,1_{n-l},0_{n-l+1},\dots0_{m}\rangle$
where the $n-l$ photons occupy the first $n-l$ modes. Because modes $n+1,\dots,m$ are never populated, we will instead just focus on the first 
$n$ modes. Our goal is to find a unitary $U_{route}$
which transforms $|s_1,\dots,s_n\rangle$ with $s_i \in \{0,1\}$ and $\sum_{i=1,\dots,n}s_i=n-l$ onto $|1_1\dots1_{n-l}0_{n-l+1}\dots0_n\rangle$.

One strategy to construct $U_{route}$ would be to perform a series of swaps of the single photons to bring them into the desired positions (note that these swaps are \emph{different} from the SWAP gate applied on a qubit/path encoded photon state). Recall that a general unitary two-mode linear optical transformation may be written as \cite{Kok07}
\begin{equation}
    \label{eqappb1}
    \mathbf{a}^{\dagger}_{out}=cos(\theta)\mathbf{a}^{\dagger}_{in}+ie^{-i\phi}sin(\theta)\mathbf{b}^{\dagger}_{in},
\end{equation}
\begin{equation}
    \label{eqappb2}
   \mathbf{b}^{\dagger}_{out}=ie^{i\phi}sin(\theta)\mathbf{a}^{\dagger}_{in}+ cos(\theta)\mathbf{b}^{\dagger}_{in}.
\end{equation}
Where $(\mathbf{a}_{in},\mathbf{b}_{in})$ and $(\mathbf{a}_{out},\mathbf{b}_{out})$ are respectively the input and output modes of the two-mode transformation, $\theta$ characterizes the transmitivity/refelecively of this transformation, and $\phi$ characterizes its phase shift (see \cite{Kok07} for details).
Also note that a \emph{phase shifter} with angle $\gamma$ can realize the following transformation
\begin{equation}
  \label{eqappb3}
  \mathbf{a}^{\dagger}_{out}=e^{i\gamma}\mathbf{a}^{\dagger}_{in}
\end{equation}
Taking $\theta=\frac{\pi}{2}$ and $\phi=0$ In Equations (\ref{eqappb1}) and (\ref{eqappb2}) we get
$$\mathbf{a}^{\dagger}_{out}=i\mathbf{b}^{\dagger}_{in},$$ and
$$\mathbf{b}^{\dagger}_{out}=i\mathbf{a}^{\dagger}_{in}.$$
Adding a phase shifter of $\gamma=-\frac{\pi}{2}$ at the level of each output mode of the two-mode transformation gives
$$\mathbf{a}^{\dagger}_{out}=\mathbf{b}^{\dagger}_{in},$$ and
$$\mathbf{b}^{\dagger}_{out}=\mathbf{a}^{\dagger}_{in}.$$
The above equations mean that a photon in any one of the input modes of this two-mode transformation will be swapped. We will henceforth refer to this two-mode transformation as our \emph{swap gadget}, and it is represented in Figure \ref{figapp1}.
 \begin{figure}[h!]
\includegraphics[scale=0.5]{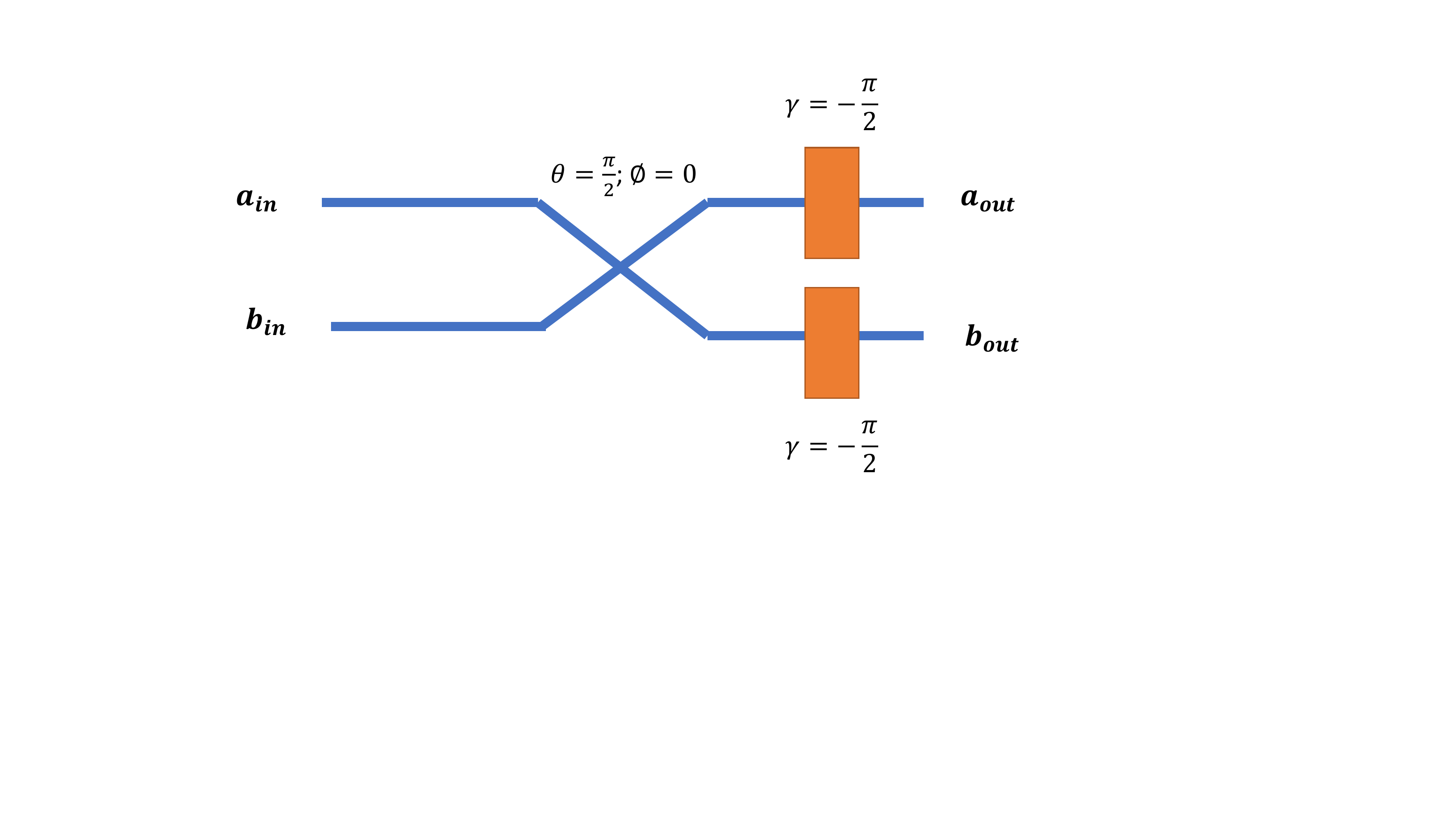}
\caption{the swap gadget. the two-mode transformation is represented by the blue X shaped figure, above which are indicated the values of $\theta$ and $\phi$. The orange rectangles are the phase shifters, above which is indicated the value of $\gamma$. }
\label{figapp1}
\end{figure}
With the swap gadget in hand, we now describe our procedure for implementing $U_{route}$. 

\begin{itemize}
    \item For mode number $i=1$, identify the closest mode to it which has occupancy one. Call this mode $j$. Then, swap the photon from mode $j$ to mode $i$. If $i \neq j$, and $i$ and $j$ are not adjacent (nearest-neighbor) modes; this swapping can be done using $O(j-i)$ swap gadgets applied on adjacent modes, starting from mode $j$  upwards onto mode $i$.
    \item Repeat this procedure for $i=2,\dots,n-l$.
\end{itemize}
An example of a linear optical circuit implementing this procedure is found in Figure \ref{fig2app}, transforming the Fock state $|0,0,1,1,1\rangle$ onto $|1,1,1,0,0\rangle$.
\begin{figure}[h!]
\includegraphics[scale=0.5]{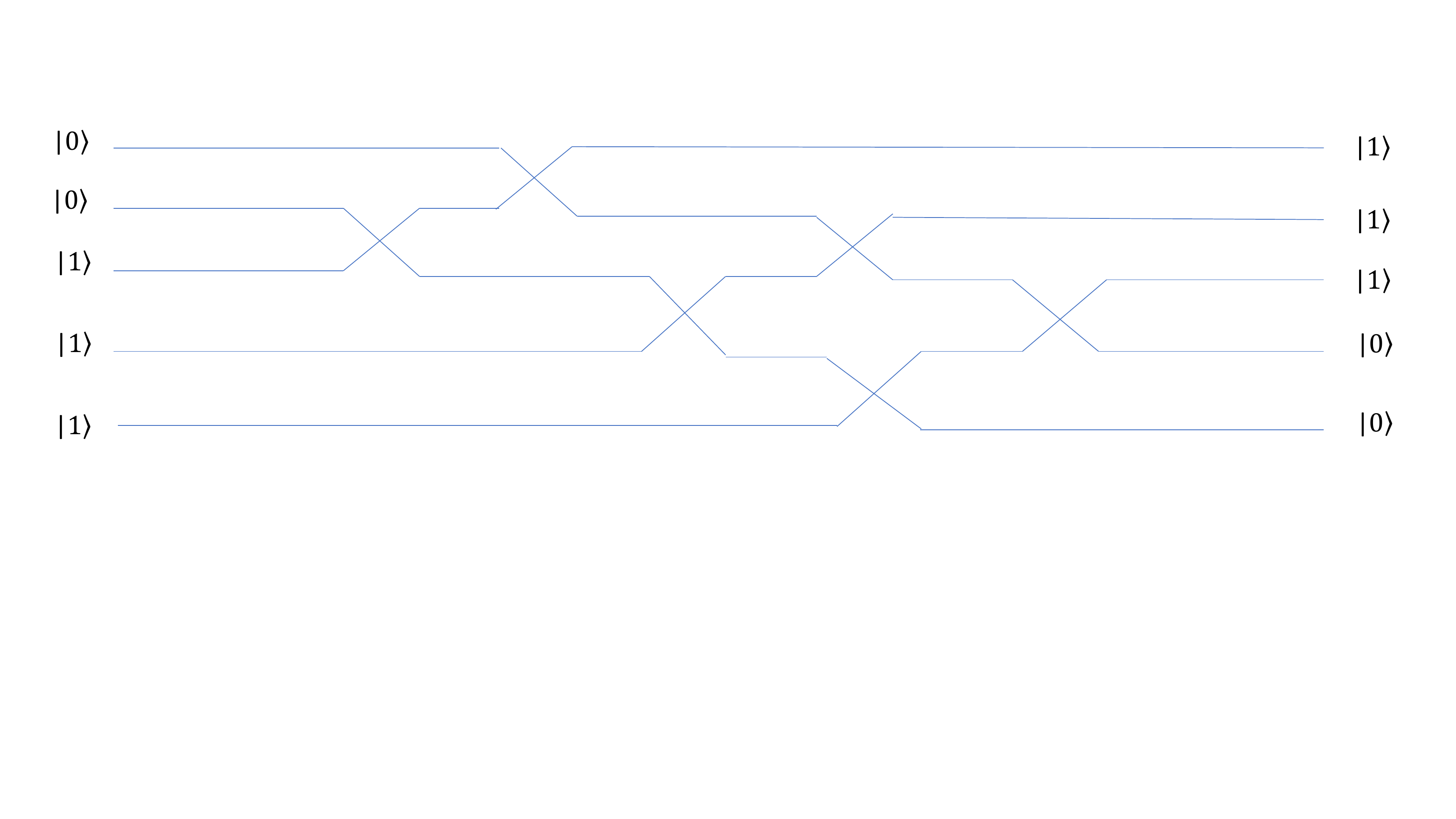}
\caption{A linear optical circuit transforming the Fock state $|0,0,1,1,1\rangle$ onto its canonical form $|1,1,1,0,0\rangle$. The swap gadgets are the X shaped figures. $U_{route}$ here is a $5 \times 5$ unitary constructed from a product of all $2 \times 2$ unitaries of the all swap gadgets. }
\label{fig2app}
\end{figure}

\section{Concerning Equation (\ref{eqtheoravph2})}
\label{appC}
We will not re-derive the proof of Equation (\ref{eqtheoravph2}) in \cite{S16}, but rather we will show here that the conditions needed for the proof of (\ref{eqtheoravph2}) in \cite{S16} are satisfied by our model of distinguishability \cite{RMC18}. For a system of $n$
photons, the proof of \cite{S16} holds for single photon states of the form 
\begin{equation}
    \label{eqC1}
    \rho_i=|\phi\rangle\langle\phi|-\delta \rho_i,
\end{equation}
for $i=1,\dots,n$, $|\phi\rangle\langle\phi|^{\otimes n}$ is the state of an \emph{ideal} system of $n$ fully indistinguishable photons, and $\delta \rho_i$ is a perturbative term with $Tr(\delta \rho_i)=0$, and $\langle \phi| \delta \rho_i | \phi \rangle \approx F_{av}<<1$ for $i \in \{1,\dots,n\}$, where $F_{av}$ is an averaged fidelity \cite{S16}.

In our model for distinguishability, based on that of \cite{RMC18}, we have that two single photon states $|\phi_i\rangle$ and $|\phi_j\rangle$ satisfy
$\langle \phi_i| \phi_j \rangle=x$ for all $i \neq j \in \{1,..,n\}$. We can write

\begin{equation}
    \label{eqC2}
    |\phi_i\rangle=|\phi\rangle+(|\phi_i\rangle-|\phi\rangle),
\end{equation}
for $i \in \{1,\dots,n\}$.
Similarly, we can write
\begin{equation}
    \label{eqC3}
    \rho_i=|\phi_i\rangle \langle \phi_i|=|\phi \rangle \langle \phi|+(|\phi \rangle_i \langle \phi_i|-|\phi \rangle \langle \phi|).
\end{equation}
Calling $\delta \rho_i=|\phi \rangle \langle \phi|-|\phi_i\rangle \langle \phi_i|$, it is straightforward to see that $Tr(\delta \rho_i)=Tr(|\phi \rangle \langle \phi|-|\phi_i\rangle \langle \phi_i|)=Tr(|\phi \rangle \langle \phi|)-Tr(|\phi_i\rangle \langle \phi_i|)=0$. Furthermore, when $x \approx 1$, we have that $|\phi_i\rangle \langle \phi_i| \approx |\phi\rangle \langle \phi|$, and therefore, $\langle \phi | \delta \rho_i |\phi \rangle=\langle \phi |( |\phi \rangle \langle \phi|-|\phi_i\rangle \langle \phi_i|) |\phi \rangle=1-|\langle \phi_i| \phi \rangle |^2 <<1$. Thus, our chosen noise model encompasses all the conditions needed to prove Equation (\ref{eqtheoravph2}), and the proof of \cite{S16} follows straightforwardly.

\end{document}